\newtheorem{proposition}{Proposition}[section]
\newtheorem{theo}{Theorem}[section]
\newtheorem{lem}[theo]{Lemma}
\theoremstyle{definition}
\newtheorem{rem}[theo]{Remark}
\newtheorem{defi}[theo]{Definition}
\begin{document}

\title{On ergodic states, spontaneous symmetry breaking and the Bogoliubov quasi-averages}

\author{Walter F. Wreszinski\\
        Instituto de Fisica USP\\
        Rua do Mat\~{a}o, s.n., Travessa R 187\\
        05508-090 S\~{a}o Paulo, Brazil\\
        \texttt{wreszins@gmail.com}\\
        and\\
        Valentin A. Zagrebnov\\
        Aix-Marseille Universit\'{e}, CNRS, Centrale Marseille, I2M \\
        Institut de Math\'{e}matiques de Marseille - UMR 7373\\
        CMI - Technop\^{o}le Ch\^{a}teau-Gombert\\
        13453 Marseille, France \\
        \texttt{valentin.zagrebnov@univ-amu.fr}\\
        }

\maketitle

\begin{abstract}
{It is shown that Bogoliubov quasi-averages select the pure or ergodic states in the ergodic decomposition of
the thermal (Gibbs) state. Our examples include quantum spin systems and many-body boson systems. As a consequence, 
we elucidate the problem of equivalence between Bose-Einstein condensation and the quasi-average spontaneous 
symmetry breaking (SSB) discussed in \cite{LSYng}, \cite{LSYng1} for continuous boson systems. The multi-mode 
extended van den Berg-Lewis-Pul\'{e} condensation of type III \cite{vdBLP}, \cite{BZ} demonstrates that the only
physically reliable quantities are those that defined by Bogoliubov quasi-averages.
}
\end{abstract}

\newpage
\tableofcontents
\section{Introduction and summary}\label{sec:Intr-Summ}
The concept of Spontaneous Symmetry Breaking (SSB) is a central one in quantum physics, both in statistical
mechanics  and quantum field theory and particle physics. In this paper we restrict
ourselves to continuous SSB since the breaking of discrete symmetries has been extensively studied and it has
quite different properties, in particular regarding the Goldstone-Mermin-Wagner theorem for both zero $T=0$ and
non-zero $T>0$ temperatures, see e.g. \cite{SimonSM} and references given there.

The definition of SSB is also well-known since the middle sixties and is well
expounded in Ruelle's book \cite{Ru}, Ch.6.5.2., and references given there, as well as \cite{BR87}, Ch.4.3.4,
and, from the point of view of local quantum theory in \cite{Haag}, Ch.III.3.2. Roughly speaking, one starts from a
state (ground or thermal), assumed to be invariant under a symmetry group $G$, but which has a
nontrivial decomposition into extremal states, which may be physically interpreted as pure thermodynamic phases.
The latter, however, do not exhibit invariance under $G$, but only under a proper subgroup $H$ of $G$.

There are basically two ways of constructing extremal states: (1) by a choice of
boundary conditions (b.c) for  Hamiltonians $H_{\Lambda}$ in finite regions; (2) by replacing $H_{\Lambda} 
\rightarrow H_{\Lambda} + \lambda B_{\Lambda}$, where $B_{\Lambda}$ is a suitable extensive operator and $\lambda$ a
real parameter, taking first $\Lambda \nearrow \mathbf{Z}^{d}$ or $\Lambda \nearrow \mathbf{R}^{d}$, and then
$\lambda \to +0$ (or $\lambda \to -0$). Here one assumes that the states considered are locally normal or locally
finite, see e.g. \cite{Sewell1} and references there. Method (2) is known as Bogoliubov's \textit{quasi-averages}
method \cite{Bog07}-\cite{Bog70}.

Note that the method (1) is not of general applicability to, e.g., continuous many-body systems or quantum field 
theory. It is thus of particular interest to show that the Bogoliubov quasi-average "trick" may be shown to 
constitute a method, whose applicability is universal, explaining, at the same time, its physical meaning. 
This is one of the main purpose of our paper.

An important element of discussion is a general connection between SSB and Off-Diagonal Long-Range Order (ODLRO),
that was studying in papers by Fannes, Pul\`{e} and Verbeure \cite{FPV1} (see also \cite{PVZ}), by Lieb,
Seiringer and Yngvason (\cite{LSYng}, \cite{LSYng1}), and by S\"{u}t\"{o} \cite{Suto1}. The central role played by
ODLRO in the theory of phase transitions in quantum spin systems was scrutinised by Dyson, Lieb and
Simon \cite{DLS}, see also the review by Nachtergaele \cite{Ntg}. For its importance
in the theories of superconductivity and superfluidity, we refer to the books by Sewell \cite{Sewell} and
Verbeure \cite{Ver}, as well as to review \cite{SeW}, Sec.3.

As a consequence of our results, a general question posed by Lieb, Seiringer and Yngvason \cite{LSYng}
concerning the equivalence between Bose-Einstein condensation $\rm{(BEC)}_{qa}$ and  Gauge Symmetry Breaking
$\rm{(GSB)}_{qa}$ both defined via the one-mode Bogoliubov quasi-average is elucidated for any \textit{type} of
generalised BEC \textit{\`{a} la} van den Berg-Lewis-Pul\`{e} \cite{vdBLP} and \cite{BZ}.

\section{{Setup: continuous SSB, ODLRO and examples}} \label{sec:Setup}

To warm up we start by some indispensable basic notations and definitions, see e.g.,
\cite{BR87}, \cite{BR97}, \cite{Sewell1} and \cite{Wrepp}

Let $\mathcal{A}$ be a unital (i.e. $\mathds{1}\in {\cal A}$) quasi-local $C^*$-algebra of observables.
Recall that positive linear functionals $\omega$ over ${\cal A}$ are called \textbf{states} if they are normalised:
$\|\omega\| =1$. Note that these functionals are automatically continuous and bounded: 
$\|\omega\| = \omega(\mathds{1})$.
The state $\omega$ is called faithful if $\omega(A^* A) = 0$ implies $A = 0$.

To construct states and dynamics of quantum (boson) systems the $C^*$-setting is too restrictive and one
has to use the $W^*$-setting (\cite{BR87}, \cite{BR97}). One defines an abstract
$W^*$-algebra $\mathfrak{M}$ as a unital $C^*$-algebra that possesses (as a Banach space) a predual $\mathfrak{M}_*$,
i.e., $\mathfrak{M} = (\mathfrak{M}_*)^*$. Every abstract $W^*$-algebra is $\ast$-isomorphic to a \textit{concrete}
$W^*$-algebra $\mathcal{B}(\mathcal{H})$ of bounded operators on a Hilbert space $\mathcal{H}$. Now we can introduce
\textit{normal} states on $W^*$-algebra as those that any $\omega$ on the corresponding concrete $W^*$-algebra
is defined by a positive trace-class operator $\rho\in \mathcal{C}_{1}(\mathcal{H})$ with trace-norm 
$\|\rho\|_{1} = 1$ such that
\begin{equation*}
\omega (A) = {\rm{Tr}}_{{\cal H}} (\rho \  A) \ , \ {\rm{for \ all}} \ A \in \mathcal{B}(\mathcal{H}) \ .
\end{equation*}

For a finite system in the Hilbert space ${\cal H}_{\Lambda}$, the Gibbs (thermal) state is normal
\begin{equation}\label{2.1}
\omega_{\beta,\mu,\Lambda}(A) = {\rm{Tr}}_{{\cal H}_{\Lambda}} (\rho_{\Lambda} A) \ , \
{\rm{for \ all}} \ A \in \mathcal{B}(\mathcal{H}_{\Lambda}) \ .
\end{equation}
and defined by the trace-class density matrix
\begin{equation}\label{2.2}
\rho_{\Lambda} = \frac{\exp(-\beta(H_{\Lambda}-\mu N_{\Lambda}))}{\Xi_{\Lambda}(\mu,\beta)} \ .
\end{equation}
Here $\Lambda$ is a finite domain in $\mathbf{Z}^{d}$ for quantum spin systems, or in $\mathbf{R}^{d}$ for continuous
many-body systems and
$\Xi_{\Lambda}$ is the grand-canonical partition function
\begin{equation}\label{2.3}
\Xi_{\Lambda}(\mu,\beta) := {\rm{Tr}}_{{\cal H}_{\Lambda}} \exp(-\beta(H_{\Lambda}-\mu N_{\Lambda})) \ ,
\end{equation}
with $\beta = 1/k_{B}T$ the inverse temperature, $\mu$ the chemical potential for continuous quantum system.
For a boson continuous quantum system the Hilbert space ${\cal H}_{\Lambda}$ coincides with the symmetric Fock space
$\mathfrak{F}_{symm}(L^{2}(\Lambda))$, and ${\cal C}^{2}_{\Lambda} = \otimes_{i=1}^{N} {\cal C}_{i}^{2}$ for quantum
spin systems, with $N=V=|\Lambda|$, the number of points in $\Lambda$. The thermodynamic limit in both cases will
be denoted by $V \to \infty$.
Operator $A$ in (\ref{2.1}) is an element of a local algebra ${\mathfrak{M}}_{\Lambda}= {\cal B}({\cal H}_{\Lambda})$
of bounded operators on ${\cal H}_{\Lambda}$. By $H_{\Lambda}$ we denote the Hamiltonian of the system in
a finite domain $\Lambda$, and by $N_{\Lambda}$ the corresponding number operator. If
$\Omega_{\Lambda} \in {\cal H}_{\Lambda}$ is the ground-state vector of operator $H_{\Lambda} -\mu N_{\Lambda}$,
then the ground state ($T=0$) is defined by
\begin{equation}\label{2.4}
\omega_{\infty,\mu,\Lambda}(A) := (\Omega_{\Lambda}, A \Omega_{\Lambda}) \mbox{ for } A \in {\cal A}_{\Lambda} \ .
\end{equation}

By $\omega_{\beta,\mu}$ and $\omega_{\infty,\mu}$ we denote thermal and ground states for the infinite-volume
(thermodynamic) limit of the finite-volume states (\ref{2.1}) and (\ref{2.3}), in the sense that
\begin{equation}\label{2.5-6}
\omega_{\beta,\mu}(A) = \lim_{V \to \infty} \omega_{\beta,\mu,\Lambda}(A) \ \ {\rm{and}} \ \
\omega_{\infty,\mu}(A) = \lim_{V \to \infty} \omega_{\infty,\mu,\Lambda}(A) \ , \
A \in \bigcup_{\Lambda \subset (\mathbb{R}^d \, {\rm{or}}\,  \mathbb{Z}^d)}{\cal A}_{\Lambda} \ .
\end{equation}

\smallskip
Now we recall that a $C^*$-dynamics on a $C^*$-algebra ${\cal A}$ is a strongly continuous one-parameter
group of $\ast$-automorphisms: $\mathbf{R}\ni t \mapsto \tau_{t}$ of $C^*$.
Then a $C^*$-\textit{dynamical system} is a corresponding pair $({\cal A}, \tau_{t})$. Note that
the strong continuity of $\{\tau_{t}\}_{t\in \mathbb{R}}$ on ${\cal A}$ means that the map
$t \mapsto \tau_{t}(A)$ is norm-continuous for any $A\in {\cal A}$. Therefore, $C^*$-\textit{dynamical systems}
are completely characterised by the corresponding densely defined and closed in ${\cal A}$ infinitesimal
generators.

It is also well-known that the $C^*$-{dynamical systems} are too restrictive for boson systems, that forces
to use the $W^*$-setting. Let $\mathfrak{M}$ be a von Neumann algebra ($W^*$-algebra) and let
$\mathbf{R}\ni t \mapsto \tau_{t}$ be a one-parameter group of weak*-continuous $\ast$-automorphisms
($W^*$-dynamics) of $\mathfrak{M}$. Then the pair $(\mathfrak{M}, \tau_{t})$ is called a
$W^{\ast}$-\textit{dynamical system}.
The continuity condition on the group $\{\tau_{t}\}_{t\in \mathbb{R}}$ means that
the weak*-densely defined and closed in $\mathfrak{M}$ infinitesimal generator corresponding to
the $W^*$-dynamics can be defined in the weak*-topology similar to the $C^*$-setting.

We comment that it is this $W^*$-setting, which is appropriate for representations of the Canonical Commutation
Relations (CCR) and description of boson systems by the Weyl algebra  \cite{BR97}, \cite{PiMe}. We shall also
use it for quantum spin systems, and, therefore, throughout the whole paper.
\begin{defi} \label{QDS}
Consider a $W^{\ast}$-{dynamical system} $(\mathfrak{M}, \tau_{t})$. A state on $\mathfrak{M}$ is
called $\tau$-invariant if $\omega \circ \tau_{t} = \omega$ for all $t\in \mathbb{R}$. If in addition this
state is normal, we refer to the triplet $(\mathfrak{M}, \tau_{t}, \omega)$ as to a Quantum Dynamical System (QDS)
generated by $(\mathfrak{M}, \tau_{t})$.
\end{defi}
Recall that GNS representation $\pi_{\omega}$ of the QDS, which is induced by the invariant state $\omega$, is
denoted by the triplet $({\cal H}_{\omega}, \pi_{\omega}, \Omega_{\omega})$. Here, $\Omega_{\omega}$ is a cyclic
vector for $\pi_{\omega}(\mathfrak{M})$ in the Hilbert space ${\cal H}_{\omega}$.
The unicity of the GNS representation implies that there exists a unique one-parameter
group $t \mapsto U_{\omega}(t)$ of unitary operators on ${\cal H}_{\omega}$ such that
\begin{equation} \label{GNS}
\pi_{\omega}(\tau_{t}(A)) = U_{\omega}(t)\pi_{\omega}(A)U_{\omega}^{*}(t) \ ,
\ U_{\omega}(t) \Omega_{\omega} = \Omega_{\omega} \ ,
\end{equation}
for any $t\in \mathbb{R}$ and $A \in \mathfrak{M}$.

Since we assumed $\omega$ to be normal, the group $\{U_{\omega}(t)\}_{t\in \mathbb{R}}$ is
strongly continuous and there exists (by the Stone theorem) a unique self-adjoint generator $H_{\omega}$
of this unitary group such that
\begin{equation} \label{GNS-H}
\pi_{\omega}(\tau_{t}(A)) = e^{i t H_{\omega}}\pi_{\omega}(A)e^{- i t H_{\omega}} \ , \
H_{\omega} \Omega_{\omega} = 0  \ .
\end{equation}

We note that GNS construction applied directly to a $C^*$-{dynamical system} with invariant state $\omega$
defines a normal extension of this state to an enveloping von Neumann algebra. Therefore, it maps the
$C^*$-{dynamical system} into a $W^*$-{dynamical system} with a normal invariant state. Hence, instead of
QDS one can start with GNS representation of the $C^*$-algebra ${\cal A}$.

{In the context of infinite boson system we suppose also that the time-invariant $\omega$ is
such that restriction to ${\cal A}_{\Lambda}$ (or $\mathfrak{M}_{\Lambda}$) is given by
$\omega_{\beta,\mu,\Lambda}$ (\ref{2.1}),(\ref{2.2}), (\ref{2.3})}

{{Now let $G$ be a group and $\{\tau_{g}\}_{g \in G}$ be the associated group of $\ast$-automorphisms
in ${\cal A}$.  Suppose that $\tau_{g}$ leaves $\omega$ invariant:
\begin{equation}\label{2.9}
\omega(\tau_{g}(A)) = \omega(A), \ \forall A \in {\cal A}, \  \forall g \in G \ .
\end{equation}
Then one can find on the GNS Hilbert space ${\cal H}_{\omega}$ a unique group of unitary operators
$\{U_{g}\}_{g \in G}$ such that
\begin{equation}\label{2.10}
\pi_{\omega}(\tau_{g}(A))= U_{g} \pi_{\omega} U_{g}^{\ast} \ \mbox{ with } \ U_{g}\Omega_{\omega}=\Omega_{\omega} \ .
\end{equation}
It is easy to show (see, e.g., \cite{Wrepp}) that the natural candidate for $U_{g}$, given by
\begin{equation}\label{2.11}
U_{g} \pi_{\omega}(A) \Omega_{\omega}= \pi_{\omega}(\tau_{g}(A))\Omega_{\omega} \ ,
\end{equation}
indeed fulfills these requirements. The $G$-invariant states forms a convex and compact in the weak*-topology set,
that we denote by $E_{{\cal A}}^{G}$.

The same properties are evidently shared by the set $E_{{\cal A}}$ of all states on ${\cal A}$.
An \textbf{extremal invariant} or \textbf{ergodic} state is a state $\omega \in E_{{\cal A}}^{G}$, which cannot be
written as a proper convex combination of two distinct states $\omega_{1},\omega_{2} \in E_{{\cal A}}^{G}$:
\begin{equation}\label{2.12}
\omega \ne \lambda \omega_{1} + (1-\lambda) \omega_{2} \mbox{ with } 0<\lambda<1 \ \mbox{ unless } \
\omega_{1}=\omega_{2}=\omega \ .
\end{equation}
}}
{There exists an alternative characterization:} we say that a state $\omega_{1}$ \textbf{majorizes}
another state $\omega_{2}$ if $\omega_{1}-\omega_{2}$ is a positive linear functional on ${\cal A}$, i.e.,
$(\omega_{1}-\omega_{2})(A^* A) \ge 0 \ \ \forall A \in {\cal A}$. Clearly, if a state is a convex combination of
two others, it majorizes both, and a state $\omega$ is said to be \textbf{pure} if the only positive linear
functionals majorized by $\omega$ are of the form $\lambda \omega$, with $0\le \lambda \le 1$. By \cite{BR87},
Theorem 2.3.15, we are allowed to use the terms pure and extremal interchangeably. When (\ref{2.12}) does not hold,
it is natural to consider $\omega$ as a \textbf{mixture} of two \textbf{pure phases} $\omega_{1}$ and $\omega_{2}$,
with proportions $\lambda$ and $(1-\lambda)$, respectively.

Thermal states $\omega_{\beta,\mu}$ satisfy the equilibrium (KMS) condition (\cite{BR97},\cite{Hug}) and will
be called KMS or thermal equilibrium states, or, for short, thermal states. The \textbf{commutant}
$\pi_{\omega}({\cal A})^{'}$ of $ \pi_{\omega}({\cal A})$ is defined as $\pi_{\omega}({\cal A})^{'} =
\{B \in  {\cal B}({\cal H}_{\omega}): \  [A,B]=0 \ \ \forall A \in \pi_{\omega}({\cal A})\}$. the strong closure
of $\pi_{\omega}({\cal A})$, called the von neumann algebra generated by $ \pi_{\omega}({\cal A})$, which also
equals $ \pi_{\omega}({\cal A})^{''}$ by von Neumann's theorem \cite{BR87}, is called a \textbf{factor} if
its \textbf{center}
\begin{equation}\label{2.13}
Z_{\omega} = \pi_{\omega}({\cal A})^{'} \cap \pi_{\omega}({\cal A})^{''} \ ,
\end{equation}
is a multiple of the identity operator
\begin{equation}\label{2.14}
Z_{\omega} = \{ \mathbb{C} \  \mathds{1}\} \ .
\end{equation}
The corresponding representation is called \textbf{factor} or \textbf{primary}, and the extension of $\omega$ to
$\pi_{\omega}({\cal A})^{''}$ is called a factor or primary state. Consider the central decomposition of a KMS state
$\omega_{\beta}$  \cite{BR87} (we omit the $\mu$ for brevity):
\begin{equation}\label{2.15}
\omega_{\beta}(A) = \int_{E_{\cal A}^{G}} d\mu(\omega_{\beta}^{'}) \omega_{\beta}^{'}(A)\ ,
\end{equation}
which, for a KMS state is identical to the extremal or ergodic decomposition,
see Theorem 4.2.10 of \cite{BR87}. The states $\omega_{\beta}^{'}$ in (\ref{2.15}) are extremal or factor states,
and the decomposition is along the center $ Z_{\omega_{\beta}}$ which is of the form (\ref{2.14}). In the examples
we shall treat, $Z_{\omega_{\beta}}$ coincides with the so-called \textbf{algebra at infinity} (\cite{BR87},
Example 4.2.11). Let $\omega$ be a spatially ($\mathbf{Z}^{d}$ - or $\mathbf{R}^{d}$) - translation invariant
state (we shall no longer distinguish these two possibilities explicitly):
\begin{equation}\label{2.16}
\omega(\tau_{{x}}(A)) = \omega(A) \  \ \forall  A \in {\cal A} , \ \forall {x} \ ,
\end{equation}
where $\tau_{{x}}$ denotes the group of automorphisms of ${\cal A}$ corresponding to translations. Let us define
\begin{equation}\label{2.17}
\eta(A) := s-\lim_{V \to \infty} \eta_{\Lambda}(A) \ ,
\end{equation}
where
\begin{equation}\label{2.18}
\eta_{\Lambda}(A) = \frac{1}{V}\int_{\Lambda} d{x} \pi_{\omega}(\tau_{{x}}(A)) \ ,
\end{equation}
again not distinguishing the lattice from the continuous case, in the former one has a sum instead of the
integral in (\ref{2.18}). The existence of (\ref{2.17}) is well-known, see \cite{BR97}, or Proposition 6.7 in
\cite{MWB}. Then by construction $\eta(A) \in Z_{\omega}$. If $\omega$ is an extremal (=factor=primary), which is
also ergodic for space translations, then (\ref{2.14}) holds and therefore
\begin{equation}\label{2.19}
\eta(A) = \omega(A) \ \mathds{1} \ .
\end{equation}
Hence, the states occurring in the extremal or ergodic decomposition of a KMS state correspond to ''freezing''
the observables at infinity to their expectation values. In correspondence with (\ref{2.17}), we extend
(\ref{2.5-6}) to space averages by
\begin{equation}\label{eqn3.2.20}
\omega_{\beta,\mu}(\prod_{i=1}^{m} \eta(A_{i}) B) = \\
\lim_{V \to \infty} \omega_{\beta,\mu,\Lambda}(\prod_{i=1}^{m}\eta_{\Lambda}(A_{i}) B)
=\lim_{V \to \infty} \omega_{\beta,\mu}(\prod_{i=1}^{m} \eta_{\Lambda})(A_{i}) B)\ .
\end{equation}
Here we assumed that $\omega_{\beta,\mu,\Lambda} $ is space translation-invariant, which may be achieved
by imposing the periodic b.c. on $\Lambda$.

Let, now, $G$ be a group, $\{\tau_{g}\}$ denote the corresponding group of $\ast$-automorphisms of ${\cal A}$, and
assume that $\tau_{g} \circ \tau_{{x}}= \tau_{{x}} \circ \tau_{g}$ for all $g \in G$ and ${x}$, i.e.,
$G$ commutes with space translations. We assume henceforth that all states are space translation-invariant,
(\ref{2.16}), and thus all states in decomposition (\ref{2.15}) are also invariant under space translations.
\begin{defi} \label{SSB}
We say that the state $\omega$ undergoes a (\textit{conventional}) Spontaneous Symmetry Breaking (SSB) of the
group $G$ if:\\
(i) $\omega$ is $G$-invariant, i.e., (\ref{2.9})-(\ref{2.11}) hold;\\
(ii) $\omega$ has a nontrivial decomposition (\ref{2.15}) into ergodic states $\omega_{\beta}^{'}$, which means
that at least two such distinct states occur in representation (\ref{2.15}), and
\begin{equation}\label{2.21}
\omega_{\beta}^{'}(\tau_{g}(A)
\ne \omega_{\beta}^{'}(A) \ ,
\end{equation}
for some $g \in G$, and for some $A \in {\cal A}$.
\end{defi}
As previously remarked, we shall use the W* description: ${\cal A}$ shall henceforth be identified with the
von neumann algebra $\pi_{\omega}({\cal A})^{''}$ corresponding to a given state $\omega$; thus, $\eta(A)$ is a
special element of ${\cal A}$.

Note that if there is no nontrivial decomposition, then there exists only one equilibrium state, which is then
automatically $G$-invariant.

The physical interpretation of condition (ii) in Definition \ref{SSB} is well-known (see \cite{Sewell1}, or 
\cite{Ru}). By (\ref{2.19}), for an ergodic state $\omega^{'}$ in (\ref{2.15}),
\begin{equation}
\lim_{V \to \infty} \frac{1}{V} \, \int_{\Lambda}d{x}\ \omega^{'}(\tau_{{x}}(A) B)
= \omega^{'}(A) \, \omega^{'}(B) \ \  \ \forall A,B \in {\cal A} \ .
\label{eqn2.2.22}
\end{equation}
By (\ref{2.14}) and (\ref{eqn2.2.22}), the spatial averages do not fluctuate in an ergodic state $\omega^{'}$:
\begin{equation}
\lim_{V \to \infty} \omega^{'}\left\{\left(\frac{1}{V} \,\int_{\Lambda}d{x} \ \tau_{{x}}(A)\right)^{2}-
\left(\frac{1}{V} \, \int_{\Lambda} d{x}\ \omega^{'}(\tau_{{x}}(A))\right)^{2}\right\} = 0 \ .
\label{eqn2.2.23}
\end{equation}
This is a characteristic property of a \textbf{pure} thermodynamic phase, in which average values, such as the
density, do not fluctuate (in contrast to a mixture).

How does this relate to SSB ? The part (ii) of Definition \ref{SSB} implies that $\tau_{g}$ cannot be implemented
by a group of unitary operators in ${\cal H}_{\omega}$ in the form (\ref{2.11}), in particular suitable generators
of the unitary group do not exist. A natural alternative to (\ref{eqn2.2.22}) is to replace it
(see \cite{SwiecaJ}, \cite{Wrepp}) by
\begin{equation}\label{2.24}
\lim_{R \to \infty, \delta \to 0} \omega_{\beta}^{'} ([Q_{R,\delta},A]) \ne 0 \mbox{ for some } A \in {\cal A}_{L}
\ .
\end{equation}
Here ${\cal A}_{L}$ is the dense subalgebra of local observables, and $Q_{R,\delta}$ is a smooth approximation
to the charge in space and time, i.e.,
\begin{equation}\label{2.25}
Q_{R,\delta} := \int d{x} dt f_{R}({x}) f_{d}(t) j^{0}({x},t) \ ,
\end{equation}
with $\lim_{|{x}| \to \infty} f_{R}({x})= 1$, $ f_{\delta}$ tends to delta-function as $\delta \to 0$,
and $ j^{0}({x},t)$ is the ''charge density''. In statistical mechanics one may ignore time-smoothing,
and choose $f_{R}$ as characteristic function of a region $\Lambda$. The limit (\ref{2.24}) exists as a consequence
of locality \cite{SwiecaJ}. For quantum statistical mechanics one uses the the property of "causality"
$[{\cal A}_{\Lambda}, {\cal A}_{\Lambda^{'}}]= 0$ if $\Lambda \cap \Lambda^{'} = \emptyset$.

To illustrate the ideas presented above we recall a standard example of quantum spin systems corresponding to
the simplest Heisenberg ferromagnet
\begin{equation}\label{2.26}
H_{\Lambda} = - \sum_{{x},{y} \in \Lambda; \|{x}-{y}\|=1} {\sigma}_{{x}} \cdot
{\sigma}_{{y}} \ ,
\end{equation}
where $\sigma_{{x}}^{i},i=1,2,3$ are the Pauli matrices at ${x}$, on the Hilbert space
${\cal H}_{\Lambda} = \otimes_{{x} \in \Lambda} {\cal C}_{{x}}^{2}$. Assuming that $H_{\Lambda}$ in
(\ref{2.26}) is defined with periodic b.c., so that the momentum is also well-defined, the Gibbs state
$\omega_{\beta,\Lambda}$ in (\ref{2.1}) (with $\mu = 0$) is invariant under the rotation group $G= SO(3)$.
Hence, $\omega_{\beta}$ satisfies (\ref{2.9}) with $G=SO(3)$, and, moreover, (\ref{2.16}) also holds by
translation invariance of $\omega_{\beta,\Lambda}$.

The "charge" (\ref{2.25}) coincides with magnetisation
\begin{equation}\label{2.27}
{Q}_{\Lambda} = {M}_{\Lambda} = \sum_{{x} \in \Lambda} {\sigma}_{{x}} \ .
\end{equation}
In an ergodic state the spatial average (\ref{2.17}), (\ref{2.18}) of the observable ${\sigma}$,
\begin{equation}\label{2.28}
\eta({\sigma}) = s-\lim_{V \to \infty} (\eta_{\Lambda} =\frac{1}{V} \sum_{{x} \in \Lambda} {\sigma}_{{x}}) \ ,
\end{equation}
is equal by (\ref{2.19}) to
\begin{equation}\label{2.29}
\eta({\sigma}) =  \lambda \, {n} \ ,
\end{equation}
where ${n}$ is a fixed unit vector and coefficient $\lambda = \lambda(\beta,\mu)$. Note that a rotation
$g=R \in G$ acts on ${n}$, by (\ref{2.28}),(\ref{2.29}), in the form
\begin{equation}\label{2.30}
\tau_{g}(\eta({\sigma})) = \lambda \, R \, {n} \ .
\end{equation}
Since (\ref{2.15}) is a central decomposition, for the Gibbs state we may write it in the form
\begin{equation}\label{2.31}
\omega_{\beta}(A) = \int d\mu_{{n}} \omega_{\beta,{n}}(A) \ ,
\end{equation}
where $\mu$ is the normalized measure on the sphere $S_{2}$ and each $\omega_{\beta,{n}}$ is ergodic. Further,
\begin{equation}
\omega_{\beta,{n}}(\tau_{g}(A)) = \omega_{\beta,{n}}(\tau_{g}(\eta(A)))
= \omega_{\beta,R {n}}(\eta(A)) \ .
\label{eqn2.2.32}
\end{equation}

Now we recall the concept of \textit{Off-Diagonal Long Range Orde}r (ODLRO), which is relevant to our discussion 
of SSB in the Heisenberg ferromagnet: in this definition $\omega_{\beta}$ is assumed to be an equilibrium state of 
the Heisenberg ferromagnet (2.25): for a discussion of other examples, see remark 2.5.
\begin{defi}\label{ODLRO}
For a given $\beta$ the state $\omega_{\beta}$ is said to exhibit ODLRO if
\begin{equation}\label{2.33}
\lim_{V \to \infty} \omega_{\beta} (\eta_{\Lambda}({\sigma})^{2}) > 0 \ .
\end{equation}
\end{defi}
Since $\eta_{\Lambda}$ is given  by (\ref{2.18}), the both $\sigma$ and $\eta_{\Lambda}(\sigma)$ are three-component
vectors, and $(\eta_{\Lambda}(\sigma))^{2} = \|\eta_{\Lambda}(\sigma)\|^2$.
Hence, the space-averaged magnetization: $\eta$ (\ref{2.28}),  \textbf{fluctuates} in the state $\omega_{\beta}$.
The following well-known  proposition relates ODLRO and SSB for the Heisenberg ferromagnet:
\begin{proposition}\label{prop:2.1}
If $\omega_{\beta}$ exhibits conventional ODLRO, it undergoes the SSB defined by (\ref{2.21}), with $A$ defined by
(2.27). Conversely, if (2.20) holds for some $\omega_{\beta,n}$ in the decomposition (2.30), with $A$ given
by (2.27), then (2.32) holds.
\end{proposition}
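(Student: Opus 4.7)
The plan is to oscillate between the full Gibbs state $\omega_\beta$ and its ergodic components $\omega_{\beta,n}$ of the central decomposition (2.31), exploiting the fact that in any ergodic state the space average is a scalar, $\eta(A)=\omega(A)\mathds{1}$ (2.19). Applied to the magnetisation, this forces $\eta(\sigma)=\lambda\,n\,\mathds{1}$ with $\lambda=\lambda(\beta)$ independent of $n$ by rotation invariance of $\omega_\beta$; this is precisely (2.29). The whole proof then reduces to identifying $\lambda^2$ as a common order parameter for both ODLRO and SSB.

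For the forward direction, suppose ODLRO (2.33) holds. Using the vector identity $\eta_\Lambda(\sigma)^2=\|\eta_\Lambda(\sigma)\|^2$ recorded after Definition 2.3 and the extension (2.20) of space-average convergence applied inside the ergodic integral (2.31), the hypothesis rewrites as
\begin{equation*}
0 \ <\ \lim_{V\to\infty}\omega_\beta\bigl(\|\eta_\Lambda(\sigma)\|^2\bigr)
\ =\ \int d\mu_n\,\omega_{\beta,n}\bigl(\|\eta(\sigma)\|^2\bigr)
\ =\ \int d\mu_n\,\|\lambda n\|^2 \ =\ \lambda^2 \ ,
\end{equation*}
so $\lambda\ne 0$. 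Picking any $n$ in the support of $\mu$ and any $R\in SO(3)$ with $Rn\ne n$, the transformation rule (2.32) then gives
\begin{equation*}
\omega_{\beta,n}\bigl(\tau_g(\eta(\sigma))\bigr)\ =\ \omega_{\beta,Rn}\bigl(\eta(\sigma)\bigr)\ =\ \lambda R n\ \neq\ \lambda n\ =\ \omega_{\beta,n}\bigl(\eta(\sigma)\bigr)\ ,
\end{equation*}
which is the SSB condition (2.21) with $A$ the magnetisation (2.27).

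For the converse, assume (2.21) holds for some ergodic component $\omega_{\beta,n}$ with the magnetisation in place of $A$. By (2.32) this reads $\lambda Rn\ne\lambda n$, forcing $\lambda\ne 0$; reading the first displayed chain of equalities from right to left then returns $\lim_{V\to\infty}\omega_\beta(\|\eta_\Lambda(\sigma)\|^2)=\lambda^2>0$, i.e.\ ODLRO (2.33).

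The one step I expect to require care is the interchange of the thermodynamic limit with the ergodic integral when applied to the \emph{quadratic} space average $\eta_\Lambda(\sigma^i)\eta_\Lambda(\sigma^j)$: equation (2.20) as written pairs a product $\prod_i\eta_\Lambda(A_i)$ with a single bounded operator $B$, so one must check that it descends componentwise to accommodate $\|\eta_\Lambda(\sigma)\|^2$ under periodic boundary conditions. Once this analytic point is secured, the algebraic content of the proposition collapses to the single observation $\eta(\sigma)=\lambda n\,\mathds{1}$, and both directions follow by inspection.
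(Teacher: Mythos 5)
Your proposal is correct and follows essentially the same route as the paper's (much terser) proof: both reduce ODLRO to the nonvanishing of $\lambda$ in $\eta(\sigma)=\lambda\,n$ via the clustering/non-fluctuation property of the ergodic components, and then read off SSB from the transformation rule $\omega_{\beta,n}(\tau_g(\eta(\sigma)))=\omega_{\beta,Rn}(\eta(\sigma))$, with the converse by the same identification. Your explicit flagging of the interchange of the thermodynamic limit with the ergodic integral is a point the paper passes over silently, but it is the same argument.
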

\begin{proof} {If $\omega_{\beta}$ exhibits ODLRO, it follows from (2.21) and (2.28) that $\lambda \ne 0$ in (2.28),
and thus the ergodic decomposition (2.30) is nontrivial. hence, SSB holds, with $A$ in (2.20) given by $\eta(A)$,
defined by (2.27). The converse statement is a direct consequence of the ergodicity of$\omega_{\beta,n}$, and the
fact that (2.20) implies that $\lambda \ne 0$.}
\end{proof}

\begin{rem}\label{rem:2.1}
The ergodic states are not invariant under $G=SO(3)$ but rather under the isotropy
(stationary) subgroup $H_{{n_{0}}}$ of $G$, and $S_{d-1}$ may be identified as the harmonic space $G/H$.
\end{rem}
\begin{rem}\label{rem:2.2}
The connection between ODLRO and the existence of several equilibrium states for quantum
spin systems was first pointed out by Dyson, Lieb and Simon in their seminal paper \cite{DLS}, see also the
review by Nachtergaele \cite{Ntg} and references given there. By \cite{DLS}, both the spin one-half XY model
for $\beta \ge \beta_{c}^{1}$, and the Heisenberg antiferromagnet for suitable spin and
$\beta \ge \beta_{c}^{2}$, with $\beta_{c}^{1},\beta_{c}^{2}$ explicitly given in \cite{DLS}, display ODLRO
in the sense of definition 2.3, but with different $\eta_{\Lambda}$ in (2.27) (for the antiferromagnet the sum
over $x \in \Lambda $ being replaced by a sum over $\Lambda \cap A$, where $\mathbf{Z}^{d} = A \cup B$, $A$ and
$B$ being disjoint sublattices. We expect that proposition 2.1 is applicable to the above mentioned cases,
yielding SSB (of the rotation group ($SO(2)$ in the XY case) according to Definition \ref{SSB}, but a choice of
$\eta_{\Lambda}$ for a general Heisenberg hamiltonian, with arbitrary spin, is not known, as well as what the
correct order parameters are, and how the set of pure phases should be parametrized and constructed (We thank
B. Nachtergaele for this last remark). We therefore restrict ourselves to the the ferromagnet as our quantum spin
example.
\end{rem}
\begin{rem}\label{rem:2.3}
By (\ref{2.29}) we have different values for the "charge density" $\eta({\sigma})$
labelled by ${n} \in S_{2}$. By a well-known result (see, e.g., \cite{MWB}, Corollary 6.3), the GNS
representations $\pi_{\omega_{{n}}}$ associated to the corresponding states $\omega_{{n}}$ in the
(central) decomposition (\ref{2.31}) are not unitary equivalent (they are, more precisely, disjoint, see
Definition 6.6 in \cite{MWB}), and the GNS Hilbert space splits into a direct integral of disjoint "sectors"
${\cal H}_{{n}}$ (see e.g. \cite{BR87}).
\end{rem}
{{We note that in this respect the case of boson systems is more complicated than spin lattice systems.
It becomes clear even on the level of the perfect Bose-gas.

To see this, consider the Perfect Bose-gas (PBG) in a three-dimensional anisotropic parallelepiped
$\Lambda:= V^{\alpha_1}\times V^{\alpha_2}\times V^{\alpha_3}$, with \textit{periodic boundary
condition} (p.b.c.) and $\alpha_1 \geq \alpha_2 \geq \alpha_3$, $\alpha_1 + \alpha_2 + \alpha_3 = 1$, i.e.
the volume $|\Lambda| = V$.  In the
boson Fock space $\mathcal{F}:= \mathcal{F}_{boson}(\mathcal{L}^2 (\Lambda))$  the Hamiltonian of this system
for the grand-canonical ensemble with chemical potential $\mu < 0$ is defined by :
\begin{eqnarray}\label{G-C-PBG}
H^{0}_{\Lambda} (\mu) \,  = T_{\Lambda} - \mu \, N_{\Lambda} =
\sum_{k \in \Lambda^{*}} (\varepsilon_{k} - \mu)\, b^{*}_{k} b_{k} \ .
\end{eqnarray}
Here one-particle kinetic-energy operator spectrum $\{\varepsilon_{k} = k^2\}_{k \in \Lambda^{*}}$, where the dual to
$\Lambda$ set is :
\begin{equation}\label{dual-Lambda}
\Lambda ^{\ast }= \{k_{j}= \frac{2\pi }{V^{{\alpha_{j}}}}n_{j} : n_{j}
\in \mathbb{Z} \}_{j=1}^{d=3}
\ \ \ {\rm{then}} \ \ \ \varepsilon _{k}= \sum_{j=1}^{d} {k_{j}^2} \ .
\end{equation}
We denote by $b_{k}:=b(\phi_k^\Lambda)$ and $b^{*}_{k}= b^{*}(\phi_k^\Lambda)$ the
boson annihilation and creation operators in the Fock space $\mathcal{F}$. They are indexed by the ortho-normal
basis $\{\phi_k^\Lambda(x) = e^{i k x}/\sqrt{V}\}_{k \in \Lambda^{*}} \subset \mathcal{L}^2 (\Lambda)$ generated by
the eigenfunctions of the self-adjoint one-particle kinetic-energy operator $(- \Delta)_{p.b.c.}$ in
$\mathcal{L}^2 (\Lambda)$. Formally these operators satisfy the Canonical Commutation Relations (CCR):
$[b_{k},b^{*}_{k'}]=\delta_{k,k'}$. Then $N_k =  b^{*}_{k} b_{k}$ is occupation-number operator of the one-particle
state $\phi_k^\Lambda$ and $N_{\Lambda} = \sum_{k \in \Lambda^{*}} N_k$ denotes the total-number operator in 
$\Lambda$.

If we denote by $\omega_{\beta,\mu,\Lambda}^{0}(\cdot)$ the grand-canonical Gibbs state of the PBG  generated by
(\ref{G-C-PBG}), then the problem of existence of conventional Bose-Einstein condensation is related to solution
of the equation
\begin{equation}\label{BEC-eq}
\rho = \frac{1}{V} \sum_{k \in \Lambda^{*}} \omega_{\beta,\mu,\Lambda}^{0}(N_k) =
\frac{1}{V} \sum_{k\in \Lambda ^{\ast}}\frac{1}{e^{\beta \left(\varepsilon_{k}-\mu \right)}-1} \ ,
\end{equation}
for a given total particle density  $\rho$ in $\Lambda$. Note that by (\ref{dual-Lambda}) the thermodynamic limit
$\Lambda \uparrow \mathbb{R}^3$ in the right-hand side of (\ref{BEC-eq})
\begin{equation}\label{I}
\mathcal{I}(\beta,\mu) = \lim_{\Lambda} \frac{1}{V} \sum_{k \in \Lambda^{*}} \omega_{\beta,\mu,\Lambda}^{0}(N_k)
= \frac{1}{(2\pi)^3}\int_{\mathbb{R}^3} d^3 k \ \frac{1}{e^{\beta \left(\varepsilon_{k}-\mu \right)}-1} \ ,
\end{equation}
exists for any $\mu <0$. It reaches its (finite) maximal value $\mathcal{I}(\beta,\mu =0) = \rho_c(\beta)$, which is
called the critical particle density for a given temperature.

The existence of finite $\rho_c(\beta)$ triggers (via \textit{saturation mechanism}) a non-zero BEC
$\rho_0(\beta) := \rho - \rho_c(\beta)$, when the total particle density $\rho > \rho_c(\beta)$.

Note that for $\alpha_1 < 1/2$, the whole condensate is sitting in the one-particle ground state mode $k=0$:
\begin{eqnarray*}
&&{\rho_0} (\beta)= {\rho} - \rho _{c}(\beta) = \lim_{\Lambda} \frac{1}{V} \omega_{\beta,\mu,\Lambda}^{0}(N_0)
= \lim_{\Lambda} \frac{1}{V} \ \left\{e^{-\beta \, {\mu_{\Lambda}(\beta,\rho\geq
\rho _{c}(\beta))} }-1\right\}^{-1}\\
&&{\mu_{\Lambda}(\beta,\rho\geq \rho _{c}(\beta))}=  {- \, \frac{1}{V}} \ \frac{1}
{\beta(\rho -\rho _{c}(\beta))} + {o}({1}/{V}) \ ,
\end{eqnarray*}
where $\mu_{\Lambda}(\beta,\rho)$ is a unique solution of equation (\ref{BEC-eq}).

This is a well-known \textit{conventional} (or the \textit{type} I \cite{vdBLP}) condensation. In particular, in
this case it make sense the ODLRO for the Bose-field
\begin{equation}\label{b-field}
b(x) = \sum_{k \in \Lambda^{*}} b_{k} \phi_{k}^{\Lambda}(x) \ .
\end{equation}
Indeed, by Definition \ref{ODLRO} one gets for the spacial average of (\ref{b-field})
\begin{equation}\label{PBG-ODLRO}
\lim_{\Lambda} \omega_{\beta,\mu,\Lambda}^{0}(\frac{1}{V}\int_{\Lambda}dx b^*(x)\ \frac{1}{V}\int_{\Lambda}dx b(x))=
\lim_{\Lambda} \omega_{\beta,\mu,\Lambda}^{0}(\frac{b^{*}_{0} b_{0}}{V}) = \rho_{0} (\beta) \ ,
\end{equation}
i.e. the ODLRO coincides with the condensate density \cite{Ver}.

For $\alpha_1 = 1/2$ (the Casimir box \cite{ZBru})
one observes the infinitely-many levels macroscopic occupation called the \textit{type} II condensation.

On the other hand, when $\alpha_1 > 1/2$ (van den Berg-Lewis-Pul\'{e} boxe \cite{vdBLP}) one obtains
\begin{equation}\label{BEC=0}
\lim_{\Lambda} \omega_{\beta,\mu,\Lambda}^{0}(\frac{b^{*}_{k} b_{k}}{V}) =
\lim_{\Lambda}\frac{1}{V}\left\{e^{\beta(\varepsilon_{k}-{\mu_{\Lambda}(\beta,\rho)})}-
1 \right\}^{-1} = 0  \ , \  \forall k \in \Lambda^{*} \ ,
\end{equation}
i.e., there is no macroscopic occupation of any mode for any value of particle density $\rho$. But a
generalised BEC (gBEC of type III) does exist in the following sense:
\begin{equation}\label{gBEC}
\rho -\rho_{c}(\beta)= \lim_{\eta \rightarrow +0}\lim_{\Lambda }
\frac{1}{V}\sum_{\left\{ k\in \Lambda^{\ast }, \left\| k\right\|
\leq \eta \right\}}\left\{e^{\beta(\varepsilon_{k}- {\mu_{\Lambda}(\beta,\rho)})}- 1
\right\}^{-1}  , \ {\rm{for}} \ \ \rho > \rho_{c}(\beta) \ .
\end{equation}
Note that (\ref{PBG-ODLRO}) and (\ref{BEC=0}) imply triviality of the ODLRO, whereas the condensation
in the sense (\ref{gBEC}) is nontrivial.

We comment that this unusual condensation is not exclusively due to the special geometry $\alpha_1 > 1/2$.
In fact the same phenomenon of the gBEC (\textit{type} III) \cite{BZ} happens due to interaction in the model with
Hamiltonian \cite{ZBru}:
\begin{equation}\label{Int-TypeIII}
H_{\Lambda }= {\sum_{k\in \Lambda^{*}} }\varepsilon_{k}b_{k}^{*}b_{k}+
\frac{a}{2V}{\sum_{k\in\Lambda^{*}}} b_{k}^{*}b_{k}^{*}b_{k}b_{k}\ , \ \text{ } a>0 \ .
\end{equation}

These examples show that connection between BEC, ODLRO, and SSB is a subtle matter. This motivates and bolsters
a relevance of the Bogoliubov \textit{quasi-average method} \cite{Bog07}-\cite{Bog70}, that we discuss in the
next two sections.}}

\section{{Selection of pure states by the Bogoliubov quasi-averages: spin systems}} \label{sec:QA-spin}
Considering further the simple example of spin system (\ref{2.26}) for the sake of argument,
at least two methods of selecting pure states may be suggested: (1) by taking in (\ref{2.1}), (\ref{2.2})
$H_{\Lambda}$
with special boundary conditions (b.c.), i.e., upon imposing on the boundary $\partial \Lambda$ of $\Lambda$
\begin{equation}\label{3.1}
|{n})_{{x}} \mbox{ such that } {\sigma}_{{x}} |{n})_{{x}}= |{n})_{{x}}
\end{equation}
The above choice leads, presumably, to the limiting states $\omega_{\beta,{n}}$ in (\ref{2.31}); (2) by replacing in
(\ref{2.1}), (\ref{2.2}) $H_{\Lambda}$ by the \textit{quasi-}Hamiltonian
\begin{equation}\label{3.2}
H_{\Lambda,{B}} := H_{\Lambda} + H_{\Lambda}^{{B}} \ ,
\end{equation}
with the \textit{symmetry-breaking} vector field ${B} \, n$ directed along the unit vector $n$:
\begin{equation}\label{3.3}
H_{\Lambda}^{{B}} = - {B} \, n \cdot \sum_{{x} \in \Lambda} {\sigma}_{{x}} \ , \ B > 0 \ .
\end{equation}
We take $B \to +0$ after the thermodynamic limit $V \to \infty$. This method, which is known as the
{Bogoliubov \textit{quasi-averages}} (\cite{Bog07}-\cite{Bog70}, \cite{ZBru} ), is currently employed as a trick,
i.e., without explicit connection to ergodic states. The quantity $\sum_{{x} \in \Lambda} {\sigma}_{{x}}$
(the magnetization) in the symmetry-breaking field is known as the \textbf{order parameter}. As spelled out in
(\ref{3.3}), it is appropriate to the Heisenberg ferromagnet (\ref{2.26}) and for the XY model, but not
for the antiferromagnet, in which case the order parameter should be replaced by the sub-lattice magnetization
$\sum_{{x} \in \Lambda \cap A} {\sigma}_{{x}}$, where $\mathbf{Z}^{d}= A \cup B$, $A,B$ denoting
two disjoint sublattices.

If we consider first $0 < \beta < \infty$, $G=SO(3)$ and $H_{\Lambda}$ the Hamiltonian (\ref{2.26}) (or
its antiferromagnetic or XY analog), with free or periodic b.c., then $H_{\Lambda}$ is G-invariant, and
thus $\omega_{\beta,\Lambda}$, defined by (\ref{2.1}),(\ref{2.2}), is also G-invariant. Taking, now, $H_{\Lambda}$
with the b.c. (\ref{3.1}), \textbf{both} $H_{\Lambda}$ and $\omega_{\beta,\Lambda}$ are \textbf{not} G-invariant.
Consider, now, $\beta = \infty$, i.e., theground state, with $H_{\Lambda}$ given by (\ref{2.26}),
defined with free or periodic b.c.. Again, $H_{\Lambda}$ is
invariant under $G$, and we may regard a ground state
\begin{equation}\label{3.5}
\omega_{\infty,\Lambda} = (\Omega_{\Lambda}, \cdot \Omega_{\Lambda})) \ ,
\end{equation}
with
\begin{equation}\label{3.6}
|\Omega_{\Lambda} = \otimes_{{x} \in \Lambda} |{n})_{{x}} \ .
\end{equation}
Then, clearly, $\omega_{\infty,\Lambda}$ as well as its infinite volume counterpart is \textbf{not} G-invariant.
Note that (\ref{3.5}) leads, however, presumably to the ergodic states $\omega_{\infty,{n}}$ in the decomposition
(\ref{2.31}), when taking the weak* limit as $\Lambda  \nearrow \mathbf{Z}^{3}$.

If we take, however, the weak* limit, as $\beta \to \infty$ along a subsequence, of $\omega_{\beta}$, it may
be conjectured that the $G$-invariant ground state
\begin{equation*}
\omega_{\infty} := \int d\mu_{{n}} \omega_{\infty,{n}} \ ,
\end{equation*}
is obtained. The limits $V \to \infty$ and $\beta \to \infty$ are not expected to commute, and we believe,
in consonance with the third principle of thermodynamics \cite{WreA}, that it is more adequate, both
physically and mathematically, to regard the states $\omega_{\beta}$ for $0 < \beta < \infty$ as fundamental,
with ground states defined as their (weak*) limit as $\beta \to \infty$ (along a subsequence or subnet).
In this sense, the assertion found in most textbooks, see also \cite{LSYng} beginning of
Section 2, that SSB occurs when the Hamiltonian is invariant, but not the state, is not correct, or, at least,
not precise. Note, however, that, in the textbooks, "state" is understood as the ground state or the vacuum
state, but not as the thermal state, for which the equivalence between the invariance of the Hamiltonian and
the state is essentially obvious.

If one uses the method of Bogoliubov quasi-averages, such difficulties do not appear, because
$\omega_{\beta,{n}}$ is thereby directly connected to $\omega_{\infty,{n}}$ for each ${n}$.
Moreover, as we motivated at the end of Section \ref{sec:Setup} by the example of gBEC, the quasi-average method
is even indispensable for quantum continuous Bose-systems. An example of its use appears in the next
Section \ref{sec:QA-Boson}. See also the conclusion.

Note that for quantum continuous many-body systems or relativistic quantum field theory imposition of boundary
conditions is very questionable, or even not feasible.

The proof of (2) for the ferromagnet follows \cite{LSYng}, but using Bloch coherent states, instead of
Glauber coherent states, in the manner of Lieb's classic work on the classical limit of quantum spin systems
\cite{Lieb1}. It will not be spelled out here, because the next section will be devoted to a similar proof in
the case $G=U(1)$ and Boson systems, but we note the result:
\begin{proposition}\label{3.1}
The ergodic states $\omega_{\beta,{n}}$ in the decomposition (\ref{2.31}) may be obtained
by the Bogoliubov quasi-average method:
\begin{equation}\label{3.7}
\omega_{\beta, {n}} = \lim_{B \to +0} \lim _{V \to \infty} \omega_{\beta,\Lambda,{n}}
\end{equation}
where
\begin{equation}\label{3.8}
\omega_{\beta,\Lambda,{n}}(A) \equiv \frac{{\rm{Tr}}_{{\cal H}_{\Lambda}}
(\exp(-\beta H_{\Lambda,{B}})A)}{{\rm{Tr}}_{{\cal H}_{\Lambda}} \exp(-\beta H_{\Lambda,{B}})} \ ,
\end{equation}
with $A \in {\cal B}({\cal H}_{\Lambda})$, and $H_{\Lambda,{B}}$ is defined by (\ref{3.2}), (\ref{3.3})
for the ferromagnet (\ref{2.26}). The limit (\ref{3.7}) is taken along a (double) subsequence of the variables
$(B,V)$. For $A$ of the form (2.27), the actual double limit in (3.6) exists, and, if ODLRO holds in the
form (2.32), SSB in the form of definition 2.2 holds for the states (3.6).
\end{proposition}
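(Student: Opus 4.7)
The plan is to adapt the Lieb--Seiringer--Yngvason argument of \cite{LSYng} by replacing Glauber coherent states (appropriate for the CCR algebra) with SU(2) Bloch coherent states on $S_{2}$, in the spirit of Lieb's classical-limit theorem for quantum spin systems \cite{Lieb1}. At each site $x\in\Lambda$ I would introduce the coherent vector $|\Omega_{x})$ with $\Omega_{x}\in S_{2}$ and resolution of identity $\int d\mu(\Omega_{x})\,|\Omega_{x})(\Omega_{x}| = \mathds{1}$ against the normalised uniform measure on the sphere, and tensor these over $\Lambda$. On the resulting product coherent states one computes explicitly the upper and lower symbols $\overline{H}_{\Lambda,B}$, $\underline{H}_{\Lambda,B}$ of the quasi-Hamiltonian (\ref{3.2})--(\ref{3.3}); for the ferromagnet (\ref{2.26}) these are, up to an $O(1)$ per-site constant, of the classical form $H^{\mathrm{cl}}(\Omega) = -\sum_{\|x-y\|=1}\Omega_{x}\cdot\Omega_{y} - B\,n\cdot\sum_{x\in\Lambda}\Omega_{x}$, and the Berezin--Lieb inequalities sandwich $\mathrm{Tr}\,e^{-\beta H_{\Lambda,B}}$ between the two associated classical partition functions.

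Standard Laplace-type/large-deviation arguments then deliver, as $V\to\infty$, a common classical free-energy density $f^{\mathrm{cl}}(\beta,B)$, convex and monotone in $B$. Differentiation in $B$ (legitimate for $B>0$ by convexity) yields $\lim_{V\to\infty}\omega_{\beta,\Lambda,n}(\eta_{\Lambda}(\sigma)) = m(\beta,B)\,n$ with $m(\beta,B) = -\partial_{B}f^{\mathrm{cl}}(\beta,B)\geq 0$; because observables of the form (\ref{2.27}) are encoded directly in this derivative, the inner thermodynamic limit in (\ref{3.7}) is a full limit for such $A$, while for a general local $A$ one passes to a subsequence in $V$. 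Monotonicity of $m$ in $B$ guarantees that $\lambda(\beta):=\lim_{B\downarrow 0} m(\beta,B)$ exists in $[0,\infty)$, with $\lambda(\beta)>0$ precisely when ODLRO (\ref{2.33}) holds (because $\eta_{\Lambda}(\sigma)^{2}$ concentrates on $\lambda(\beta)^{2}$ in the coincidence limit of the two Berezin--Lieb bounds). In that case the limit $\omega_{\beta,n}:=\lim_{B\downarrow 0}\lim_{V\to\infty}\omega_{\beta,\Lambda,n}$ satisfies $\omega_{\beta,n}(\eta(\sigma)) = \lambda(\beta)\,n$, is invariant under the isotropy subgroup $H_{n}\subset SO(3)$, and has deterministic central value $\lambda(\beta)\,n$; by (\ref{2.19}) this identifies it with an ergodic element of the decomposition (\ref{2.31}). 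The SSB conclusion of Definition \ref{SSB} is then immediate from (\ref{eqn2.2.32}): for any rotation $g$ with $Rn\neq n$ one has $\omega_{\beta,n}(\tau_{g}(\eta(\sigma))) = \lambda(\beta)\,Rn \neq \lambda(\beta)\,n = \omega_{\beta,n}(\eta(\sigma))$.

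The main obstacle I anticipate is showing the coincidence of the upper- and lower-symbol classical limits \emph{uniformly in $B$} on a right neighbourhood of $B=0$; only this permits the interchange $\partial_{B}\lim_{V\to\infty} = \lim_{V\to\infty}\partial_{B}$ that converts the quasi-average of $\eta_{\Lambda}(\sigma)$ into a genuine thermodynamic limit rather than merely a limit along a subnet. For spin-$1/2$ the commutator-defect and boundary contributions to $\overline{H}_{\Lambda,B}-\underline{H}_{\Lambda,B}$ are $O(V)$ in operator norm but $o(V)$ after division by $V$; nevertheless this must be checked uniformly as $B\downarrow 0$. A second, conceptually finer point is the identification of $\omega_{\beta,n}$ with an \emph{ergodic} element of (\ref{2.31}) rather than merely a non-$G$-invariant state: this is settled by the deterministic value of $\eta(\sigma)$ in $\omega_{\beta,n}$ via (\ref{2.14})--(\ref{2.19}), but presupposes the absence of fluctuations of $\eta_{\Lambda}(\sigma)$ in $\omega_{\beta,n}^{B}$ for $B>0$, which is again a consequence of the strict convexity of $f^{\mathrm{cl}}(\beta,\cdot)$.
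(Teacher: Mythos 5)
Your proposal follows essentially the same route the paper intends: the authors state that the proof ``follows [LSY], but using Bloch coherent states, instead of Glauber coherent states, in the manner of Lieb's classic work on the classical limit of quantum spin systems,'' with the Berezin--Lieb inequalities, convexity of the free energy in $B$, and the Griffiths lemma supplying the existence of the double limit for observables of the form (2.27), exactly as you outline (the details are only spelled out in the paper for the boson analogue, Theorem 4.1). The one step the paper treats slightly differently is the identification of $\omega_{\beta,n}$ with the states actually occurring in the decomposition (2.31): rather than relying solely on the deterministic value of $\eta(\sigma)$ via (2.19), they invoke the uniqueness of the ergodic (equivalently central) decomposition of the KMS state, which holds because the spin algebra is asymptotically abelian for space translations; you may wish to add that remark to close the identification cleanly.
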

{The identification of $\omega_{\beta,n}$ in (3.6) with those occurring in the decomposition (2.30) is possible by
the unicity of the ergodic decomposition, in view of the fact that the spin algebra is asymptotically abelian
for the space translations, see \cite{BR87}, pp 380,381. Since for KMS states the ergodic decomposition
coincides with the central decomposition, the extension of the states to elements of the center is also
unique.}
\section{{Continuous boson systems: quasi-averages, condensates, and pure states}}\label{sec:QA-Boson}

We now study the states of Boson systems, and, for that matter, assume, together with Verbeure (\cite{Ver}, Ch.4.3.2)
that they are analytic in the sense of \cite{BR97}, Ch.5.2.3. We start, with \cite{LSYng}, with the
Hamiltonian for Bosons in a cubic box $\Lambda$ of side $L$ and volume $V=L^{3}$,
\begin{equation}\label{4.1}
H_{\Lambda,\mu} = H_{0,\Lambda,\mu} + V_{\Lambda} \ ,
\end{equation}
where
\begin{equation}\label{4.2}
V_{\Lambda} = \frac{1}{V} \sum_{{k},{p},{q}}\nu({p})b_{{k}+{p}}^{*}b_{{q}-{p}}^{*}b_{{k}}b_{{q}} \ ,
\end{equation}
with periodic b.c., $\hbar=2m=1$, and $k,p,q \in \Lambda^*$. Here $\Lambda^*$ is dual (with respect to Fourier
transformation) set corresponding to $\Lambda$. Here $\nu$ is the Fourier transform of the two-body potential 
$v({x})$, with bound
\begin{equation}\label{4.3}
|\nu({k})| \le \phi < \infty \ ,
\end{equation}
and
\begin{equation}\label{4.4}
H_{0,\Lambda,\mu} = \sum_{{k}} {k}^{2} b_{{k}}^{*}b_{{k}} -\mu N_{\Lambda} \ ,
\end{equation}
\begin{equation}\label{4.5}
N_{\Lambda} = \sum_{{k}} b_{{k}}^{*}b_{{k}} \ ,
\end{equation}
with $[b_{{k}},b_{{l}}^{*}]=\delta_{{k},{l}}$ the second quantized annihilation and creation
operators. The quasi-Hamiltonian corresponding to (3.2) is taken to be
\begin{equation}\label{4.6}
H_{\Lambda,\mu,\lambda} = H_{\Lambda,\mu} + H_{\Lambda}^{\lambda} \ ,
\end{equation}
with the symmetry-breaking field analogous to (3.3) given by
\begin{equation}\label{4.7}
H_{\Lambda}^{\lambda} = \sqrt{V}(\bar{\lambda}_{\phi} b_{{0}}+\lambda_{\phi} b_{{0}}^{*}) \ .
\end{equation}
Above,
\begin{equation}\label{4.8}
\lambda_{\phi} = \lambda \exp(i\phi) \ \mbox{ with } \lambda \geq 0 \, , \,
\mbox{ where } \, {\rm{arg}}(\lambda) = \phi \in [0,2\pi) \ .
\end{equation}
We take initially $\lambda \geq 0$ and consider first the perfect Bose-gas to define
\begin{equation}\label{4.9.1}
H_{0,\Lambda, \mu, \lambda} = H_{0,\Lambda,\mu} + H_{\Lambda}^{\lambda} \ .
\end{equation}
We may write
\begin{equation*}
H_{0,\Lambda,\mu,\lambda}= H_{{0}}+H_{{k}\ne{0}} \ ,
\end{equation*}
where $H_{{0}} = -\mu \ b_{{0}}^{*}b_{{0}}+\sqrt{V}(\bar{\lambda}_{\phi} b_{{0}}+
\lambda_{\phi} b_{{0}}^{*})$. The grand partition function $\Xi_{\Lambda}$ splits into a product over the zero mode
and the remaining modes. We introduce the canonical shift transformation
\begin{equation}\label{4.9.2}
\widehat{b}_{{0}} := b_{{0}}+\frac{\lambda_{\phi} \sqrt{V}}{\mu} \ ,
\end{equation}
without altering the nonzero modes, and assume henceforth $\mu < 0$. We thus obtain for the grand partition
function $\Xi_{\Lambda}$,
\begin{equation}\label{4.9.3}
\Xi_{\Lambda}(\beta,\mu,\lambda) = (1-\exp(\beta \mu))^{-1}\exp(-\frac{\beta |\lambda|^{2}V}{\mu})\
\Xi^{\prime}_{\Lambda} \ ,
\end{equation}
where
\begin{equation}\label{4.9.4}
\Xi^{\prime}_{\Lambda} := \prod_{{k} \ne {0}}  (1-\exp(-\beta(\epsilon_{{k}}-\mu)))^{-1} \ ,
\end{equation}
with $\epsilon_{{k}}={k}^{2}$. Recall that the grand-canonical state for the
perfect Bose-gas is
\begin{equation}\label{4.9.5}
\omega^{0}_{\beta,\mu,\Lambda,\lambda}(\cdot):= {\frac{1}{\Xi_{\Lambda}}} \
{\rm{Tr}}[e^{-\beta H_{0,\Lambda,\mu,\lambda}} \ (\cdot )]  \ ,
\end{equation}
see Section \ref{sec:Setup}. Then it follows from (\ref{4.9.3})-(\ref{4.9.5}) that the mean density ${\rho}$
equals to
\begin{equation}
{\rho}=\omega_{\beta,\mu,\Lambda,\lambda}(\frac{N_{\Lambda}}{V})= \frac{1}{V(\exp(-\beta \mu)-1)}
+ \frac{|\lambda|^{2}}{\mu^{2}} + \\
\frac{1}{V} \sum_{{k} \ne {0}} \frac{1}{\exp(\beta(\epsilon_{{k}}-\mu))-1} \ .
\label{4.9.6}
\end{equation}

Equation (\ref{4.9.6}) is the starting point of our analysis. Let
\begin{equation}\label{4.10}
{\rho_{c}}(\beta) \equiv \int \frac{d{k}}{2\pi^{3}}(\exp(\beta \epsilon_{{k}})-1)^{-1} \ .
\end{equation}
\begin{lem}\label{4.1}
Let $0 < \beta <\infty$ be fixed. Then, for each
\begin{equation}\label{4.11}
{\rho_{c}} < {\rho} < \infty \ ,
\end{equation}
and for each $\lambda >0$, $V <\infty$, there exists a unique solution of (\ref{4.9.6}) of the form
\begin{equation}
\mu(V,|\lambda|,{\rho}) = -\frac{|\lambda|}{\sqrt{{\rho}-{\rho_{c}}(\beta)}}\\
 + \alpha(|\lambda|,V) \ ,
\label{4.12.1}
\end{equation}
with
\begin{equation}\label{4.12.2}
\alpha(|\lambda|,V) \ge 0 \ \ \forall \ |\lambda|, V \ ,
\end{equation}
and such that
\begin{equation}\label{4.13}
\lim_{|\lambda| \to 0} \lim_{V \to \infty} \frac{\alpha(|\lambda|,V)}{|\lambda|} = 0 \ .
\end{equation}
\end{lem}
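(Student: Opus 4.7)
The plan is to treat (4.9.6) as an implicit equation for $\mu$ on $(-\infty,0)$. First I would establish existence and uniqueness of the solution $\mu(V,|\lambda|,\rho)$ at fixed $V,|\lambda|,\rho$. Denoting the right-hand side of (4.9.6) by $R_V(\mu)$, each of its three terms is continuous and strictly increasing in $\mu$ on $(-\infty,0)$; moreover $R_V(\mu) \to 0$ as $\mu \to -\infty$ and $R_V(\mu) \to +\infty$ as $\mu \to 0^-$ (the divergence being driven by $|\lambda|^2/\mu^2$ and by the first term). By continuity and strict monotonicity, the intermediate value theorem gives a unique $\mu \in (-\infty,0)$ with $R_V(\mu)=\rho$.

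Next, set $\alpha(|\lambda|,V) := \mu(V,|\lambda|,\rho) + |\lambda|/\sqrt{\rho-\rho_c(\beta)}$, so (4.12.1) is just a change of variable, and condition (4.12.2) becomes the statement that $\mu \ge \mu_0 := -|\lambda|/\sqrt{\rho-\rho_c}$. By strict monotonicity of $R_V$ this is equivalent to $R_V(\mu_0) \le \rho$. Substituting $\mu_0$ into (4.9.6) and using $|\lambda|^2/\mu_0^2=\rho-\rho_c$, the required inequality collapses to
\[
\frac{1}{V(e^{-\beta\mu_0}-1)} + \frac{1}{V}\sum_{k\neq 0}\frac{1}{e^{\beta(\epsilon_k-\mu_0)}-1} \;\le\; \rho_c(\beta) \ .
\]
I would verify this by comparing the discrete sum to the integral (4.10) — exploiting monotone convergence of the Riemann sum to $\mathcal{I}(\beta,\mu_0)\le\rho_c$ for $\mu_0<0$ — while bounding the first term, which is $O\bigl((V\beta|\mu_0|)^{-1}\bigr)=O\bigl((V|\lambda|)^{-1}\bigr)$ and absorbs into the gap $\rho_c - \mathcal{I}(\beta,\mu_0)>0$.

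For the iterated limit (4.13) I would pass to the thermodynamic limit first. With $\mu<0$ fixed, the first term of (4.9.6) vanishes and the third converges to $\mathcal{I}(\beta,\mu)$, giving the infinite-volume equation $\rho=|\lambda|^2/\mu^2+\mathcal{I}(\beta,\mu)$. Writing $A:=\rho-\rho_c>0$ and $B(\mu):=\rho_c-\mathcal{I}(\beta,\mu)\ge 0$, one solves $\mu=-|\lambda|/\sqrt{A+B(\mu)}$, whence
\[
\alpha \;=\; \mu-\mu_0 \;=\; \frac{|\lambda|\,B(\mu)}{\bigl(\sqrt{A+B(\mu)}+\sqrt{A}\bigr)\sqrt{A(A+B(\mu))}}\;\ge\; 0 \ .
\]
Since $|\lambda|\to 0$ forces $\mu\to 0^-$, the standard three-dimensional asymptotic $B(\mu)=\rho_c-\mathcal{I}(\beta,\mu)=O(\sqrt{-\mu})$ (coming from the $1/(\beta k^2)$ infrared behaviour of the integrand in (4.10) integrated against the $k^2\,dk$ measure) together with $\mu=O(|\lambda|)$ gives $B=O(\sqrt{|\lambda|})$, hence $\alpha/|\lambda|=O(B)\to 0$.

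The main obstacle is the pointwise sign claim (4.12.2) at finite $V$: the thermodynamic-limit identity for $\alpha$ displays the sign transparently via $B(\mu)\ge 0$, but at finite $V$ one must keep the Riemann-sum/integral discrepancy and the $k=0$ correction under simultaneous control uniformly in $\lambda$, which is precisely where the 3D subcritical saturation $\mathcal{I}(\beta,\mu_0)<\rho_c$ is used as a strict buffer.
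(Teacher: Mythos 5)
The paper offers no proof to measure you against---the remark immediately following the lemma states that the proof is skipped---so your proposal must stand on its own. Its first and last parts do: existence and uniqueness of $\mu(V,|\lambda|,\rho)<0$ follow exactly as you say from strict monotonicity and the boundary behaviour of the right-hand side of (\ref{4.9.6}), and your treatment of (\ref{4.13}) is correct. After $V\to\infty$ at fixed $|\lambda|>0$ the equation becomes $\rho=|\lambda|^{2}/\mu^{2}+\mathcal{I}(\beta,\mu)$, your closed formula for $\alpha=\mu-\mu_{0}$ in terms of $A=\rho-\rho_{c}$ and $B(\mu)=\rho_{c}-\mathcal{I}(\beta,\mu)\ge 0$ is right, and the three-dimensional infrared estimate $B(\mu)=O(\sqrt{-\mu})=O(\sqrt{|\lambda|})$ gives $\alpha/|\lambda|\to 0$; this also yields $\alpha\ge 0$ \emph{after} the thermodynamic limit, which is all that the subsequent formulas (\ref{4.14.3})--(\ref{4.15.1}) actually require.

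The gap is exactly where you flagged it, and it cannot be closed the way you propose. You correctly reduce (\ref{4.12.2}) to $R_{V}(\mu_{0})\le\rho$, i.e.\ to $\frac{1}{V}(e^{-\beta\mu_{0}}-1)^{-1}+\frac{1}{V}\sum_{k\ne 0}(\cdots)\le\rho_{c}(\beta)$, but the proposed absorption of the zero-mode term into the gap $\rho_{c}-\mathcal{I}(\beta,\mu_{0})$ fails quantitatively: that gap is $O(\sqrt{-\mu_{0}})=O(\sqrt{|\lambda|})$, while the zero-mode term is of order $(V\beta|\mu_{0}|)^{-1}\sim (V|\lambda|)^{-1}$, so the inequality can only hold when $V\gtrsim |\lambda|^{-3/2}$, not ``for all $|\lambda|,V$''. (The sum-versus-integral comparison is also not a clean monotone bound: the coordinate-axis and coordinate-plane modes contribute positive corrections that likewise diverge as $\mu_{0}\to 0^{-}$ at fixed $V$.) In fact the pointwise sign claim (\ref{4.12.2}) is false as literally quantified: at fixed finite $V$, as $|\lambda|\to 0$ the solution $\mu(V,|\lambda|,\rho)$ tends continuously to the strictly negative chemical potential $\bar{\mu}_{V}<0$ solving (\ref{4.9.6}) with $\lambda=0$, while $\mu_{0}=-|\lambda|/\sqrt{\rho-\rho_{c}}\to 0^{-}$, so $\alpha(|\lambda|,V)\to\bar{\mu}_{V}<0$. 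The honest resolution is not a cleverer finite-volume estimate but a restriction of the quantifier: prove $\alpha\ge 0$ either after $V\to\infty$ or for $V$ large compared with a suitable power of $|\lambda|^{-1}$, and note explicitly that the lemma as stated needs the same qualification; the iterated-limit statement (\ref{4.13}), which is what the paper uses, survives unchanged.
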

\begin{rem}\label{4.2}
{We skip the proof of this lemma, but we note that besides the cube $\Lambda$, it is also true for the case of
three-dimensional anisotropic parallelepiped $\Lambda:= V^{\alpha_1}\times V^{\alpha_2}\times V^{\alpha_3}$,
with \textit{periodic boundary condition} (p.b.c.) and $\alpha_1 \geq \alpha_2 \geq \alpha_3$,
$\alpha_1 + \alpha_2 + \alpha_3 = 1$, i.e. the volume $|\Lambda| = V$.}
\end{rem}
We have now that
\begin{equation}
\lim_{\lambda \to +0} \lim_{V \to \infty}\omega^{0}_{\beta,\mu,\Lambda,\lambda}(\eta_{\Lambda}(b_{{0}}^{*}))=
\lim_{\lambda \to +0} \lim_{V \to \infty} \frac{\partial}{\partial \lambda_{\phi}} 
p_{\beta,\mu,\Lambda,\lambda_{\phi}}
\ ,
\label{4.14.1}
\end{equation}
where $\eta$ is defined as in (\ref{2.17}),(\ref{2.18}). Above we denote by
\begin{equation}\label{4.14.2}
 p_{\beta,\mu,\Lambda,\lambda}=\frac{1}{\beta V}\ln \Xi_{\Lambda}(\beta,\mu,\lambda) \ ,
\end{equation}
the pressure. By (\ref{4.9.6}),(\ref{4.14.2}) and the fact that the second term in (\ref{4.9.6}) equals
$(\lambda_{\phi}\bar{\lambda}_{\phi})/{\mu^{2}}$, we obtain
\begin{equation}\label{4.14.3}
\frac{\partial}{\partial \lambda_{\phi}} p_{\beta,\mu,\Lambda,\lambda_{\phi}}=
-\frac{\bar{\lambda}_{\phi}}{\mu} \ .
\end{equation}
By (\ref{4.14.3}), (\ref{4.12.1}) and (\ref{4.14.1}),
\begin{equation}
\lim_{\lambda \to +0} \lim_{V \to \infty}\omega^{0}_{\beta,\mu,\Lambda,\lambda}(\eta_{\Lambda}(b_{{0}}^{*}))
= \sqrt{\rho_{{0}}} \exp(i\phi) \ ,
\label{4.15.1}
\end{equation}
where, for the perfect Bose-gas,
\begin{equation*}
\rho_{{0}} = {\rho}-{\rho_{c}}(\beta) \ .
\end{equation*}
We see therefore that the phase in (\ref{4.14.1}) remains in (\ref{4.15.1}) even after the limit $\lambda \to +0$.
Define the states
\begin{equation}\label{PBG-LimSt-phi}
\omega^{0}_{\beta,\mu,\phi} := \lim_{\lambda \to +0}
\lim_{V \to \infty}\omega^{0}_{\beta,\mu,\Lambda,\lambda_{\phi}} \ ,
\end{equation}
where the double limit along a subnet exists by weak* compactness \cite{Hug}, \cite{BR87}.

For this and the forthcoming definitions, we are referring to the full interacting Bose gas (\ref{4.1})-(\ref{4.3}),
with $\omega$ replaced by $\omega^{0}$. The corresponding definitions
for the general case of the quantities $\omega_{\beta,\mu,\Lambda,\lambda}$,  $\omega_{\beta,\mu,\lambda,\phi}$ and
$\omega_{\beta, \mu, \phi}$ are the obvious analogues of (\ref{4.9.5}) and (\ref{PBG-LimSt-phi}),
with $H_{0,\Lambda,\mu,\lambda}$ replaced by $H_{\lambda,\mu,\lambda}$.

We say (cf Section \ref{sec:Setup}) that the interacting Bose-gas undergoes the \textit{zero-mode}
\textbf{Bose-Einstein condensation} (BEC) (and/or ODLRO) if
\begin{equation}\label{4.16}
\lim_{V \to \infty} \omega_{\beta,\mu,\Lambda}(\frac{b_{{0}}^{*}b_{{0}}}{V}) = \rho_{{0}}>0 \ .
\end{equation}

We define the group of \textit{gauge} transformations $\{\tau_{\lambda}|\lambda \in [0,2\pi)\}$ by the operations
\begin{equation}
\tau_{\lambda}(b^{*}(f)) = \exp(i\lambda)b^{*}(f)\\
\tau_{\lambda}(b(f)) = \exp(-i\lambda)b(f) \ ,
\label{4.17}
\end{equation}
where $b^{*}(f), b(f)$ are the creation and annihilation operators smeared over test-functions $f$ from the Schwartz
space. This group is isomorphic to the group $U(1)$.

Note that (\ref{4.15.1}), (\ref{PBG-LimSt-phi}) show that the states $\omega_{\beta,\mu,\phi}$ are not gauge
invariant. Assuming that they
are the ergodic states in the ergodic decomposition of $\omega_{\beta,\mu}$, which we shall prove next,
in greater generality, for the interacting system, it follows that BEC is equivalent to SSB for the free
Bose gas. It is illuminating to see, however, in the free case, a different explicit mechanism for the
appearance of the phase, which is connected with (\ref{4.12.1}) of Lemma \ref{4.1}, i.e., that the chemical
potential remains proportional to $|\lambda|$ even after the thermodynamic limit (together with (\ref{4.14.3})).
This property persists for the \textit{interacting} system, see below.
\begin{rem}\label{4.3}
{Note that these results are independent of the anisotropy, i.e. of whether the condensation for $\lambda =0$ is
in single mode ($k=0$) or it is extended as the gBEC-type III, Section \ref{sec:Setup}.
This means that the Bogoliubov quasi-average method solves the question about equivalence between
$\rm{(BEC)}_{qa}$, $\rm{(SSB)}_{qa}$ and $\rm{(ODLRO)}_{qa}$ if they are defined via \textit{one-mode} quasi-average.

To this aim we re-consider the prefect Bose-gas (\ref{G-C-PBG}) with symmetry breaking sources (\ref{4.7})
in a single mode $q \in \Lambda^{*}$:
\begin{eqnarray}\label{freeQE}
H^{0}_{\Lambda} (\mu; \eta) \, := \, H^{0}_{\Lambda}(\mu) \, + \, \sqrt{V} \ \big( \overline{\eta} \  b_{{q}} +
\eta \ b^{*}_{{q}} \big) \ , \ \mu < 0.
\end{eqnarray}
Then for a fixed density ${\rho}$, the the grand-canonical condensate equation (\ref{BEC-eq}) for (\ref{freeQE})
takes the following form:
\begin{eqnarray}\label{perfect-gas-with-source-density-equation-finite-volume}
&&{\rho} = \rho_{\Lambda}(\beta, \mu, \eta) \, := \, \frac{1}{V} \sum_{k \in \Lambda^{*}_{l}}
\omega_{\beta,\mu,\Lambda,\eta}^{0}(b^{*}_{k}b_{k}) = \\
&&\frac{1}{V} (e^{\beta(\varepsilon_{{q}} - \mu)}-1)^{-1} \, + \, \frac{1}{V}
\sum_{k\in \Lambda^{*}\setminus{q}} \frac{1}{e^{\beta(\varepsilon_{k} - \mu)}-1} \, + \, 
\frac{\vert \eta \vert\, ^{2}}
{(\varepsilon_{{q}} - \mu)\, ^{2}} \ . \nonumber
\end{eqnarray}

According the quasi-average method, to investigate a possible condensation, one must first take the thermodynamic 
limit in the right-hand side of (\ref{perfect-gas-with-source-density-equation-finite-volume}), and then switch 
off the symmetry breaking source: $\eta \rightarrow 0$. Recall that the critical density, which defines the 
threshold of boson saturation is equal to $\rho_c(\beta) = \mathcal{I}(\beta,\mu=0)$ (\ref{I}), where
$\mathcal{I}(\beta,\mu)=\lim_{\Lambda} \rho_{\Lambda}(\beta, \mu , \eta = 0)$.

Since $\mu < 0$, we have to distinguish two cases:\\
(i) Let ${q}\in \Lambda^{*}$ be such that $\lim_{\Lambda} \varepsilon_{{q}} > 0$, we obtain from
(\ref{perfect-gas-with-source-density-equation-finite-volume}) the condensate equation
\begin{eqnarray*}
{\rho} \, = \, \lim_{\eta \rightarrow 0}
\lim_{\Lambda} \rho_{\Lambda}(\beta, \mu, \eta) \, = \, \mathcal{I}(\beta, \mu) \ ,
\end{eqnarray*}
i.e. the quasi-average coincides with the average. Hence, we return to the analysis of the condensate equation
(\ref{perfect-gas-with-source-density-equation-finite-volume}) for $\eta =0$. This leads to finite-volume solutions
$\mu_{\Lambda}(\beta,\rho)$ and consequently to all possible types of condensation as a function of anisotropy
$\alpha_1$, see Section \ref{sec:Setup} for details.\\
(ii) On the other hand, if ${q}\in \Lambda^{*}$ is such that $\lim_{\Lambda} \varepsilon_{{q}} = 0$, then 
thermodynamic limit in the right-hand side of the condensate equation 
(\ref{perfect-gas-with-source-density-equation-finite-volume}) yields:
\begin{eqnarray}\label{perfect-gas-with-source-density-equation-infinite-volume}
{\rho} =  \lim_{\Lambda} \rho_{\Lambda}(\beta, \mu, \eta)
\, = \, \mathcal{I}(\beta, \mu) + \frac{\vert \eta \vert\, ^{2}}{\mu\, ^{2}} \ .
\end{eqnarray}

Now, if ${\rho} \leq \rho_{c}(\beta)$, then the limit of solution of
(\ref{perfect-gas-with-source-density-equation-infinite-volume}):
$\lim_{\eta \rightarrow 0}{\mu}(\beta, {\rho}, \eta) = {\mu}_{0} (\beta, {\rho}) <0$,
where ${\mu}(\beta,{\rho}, \eta)= \lim_{\Lambda}{\mu}_{\Lambda} (\beta,{\rho}, \eta)<0 $ is thermodynamic limit of
the finite-volume solution of condensate equation (\ref{perfect-gas-with-source-density-equation-finite-volume}).
Therefore, there is no condensation in any mode.

But if ${\rho} > \rho_{c}(\beta)$, then $\lim_{\eta \rightarrow 0}{\mu}(\beta, {\rho},\eta) =0$ and the
density of condensate is
\begin{equation}\label{BEC-qa}
\rho_{0}(\beta) = {\rho} - \rho_{c}(\beta) =
\lim_{\eta \rightarrow 0}\frac{\vert \eta \vert\, ^{2}}{\mu(\beta, {\rho},\eta)\, ^{2}} \ .
\end{equation}
Note that expectation of the particle density in the $q$-mode
(see (\ref{perfect-gas-with-source-density-equation-finite-volume})) is
\begin{equation*}
\omega_{\beta,\mu,\Lambda,\eta}^{0}({b^{*}_{q}b_{q}}/{V}) = \frac{1}{V} (e^{\beta(\varepsilon_{{q}} - \mu)}-1)^{-1}
+ \frac{\vert \eta \vert\, ^{2}} {(\varepsilon_{{q}} - \mu)\, ^{2}} \ .
\end{equation*}
Then by (\ref{BEC-qa}) the corresponding Bogoliubov quasi-average for ${b^{*}_{q}b_{q}}/{V}$ is equal to
\begin{eqnarray}\label{Bog-qa}
&&{\rho} - \rho_{c}(\beta)=\lim_{\eta \rightarrow 0}\lim_{\Lambda}\omega_{\beta,{\mu}_{\Lambda}
(\beta,{\rho}, \eta),\Lambda,\eta}^{0}({b^{*}_{q}b_{q}}/{V}) =  \\
&&\lim_{\eta \rightarrow 0}\lim_{\Lambda}\frac{1}{V} (e^{\beta(\varepsilon_{{q}} - {\mu}_{\Lambda}
(\beta,{\rho}, \eta))}-1)^{-1} +
\frac{\vert \eta \vert\, ^{2}} {(\varepsilon_{{q}} - {\mu}_{\Lambda} (\beta,{\rho}, \eta))\, ^{2}} \ , \nonumber
\end{eqnarray}
where ${\mu}_{\Lambda} (\beta,{\rho}, \eta)<0$ is a unique solution of the
condensate equation (\ref{perfect-gas-with-source-density-equation-finite-volume}) for ${\rho} > \rho_{c}(\beta) $.

Note that by virtue of (\ref{BEC-qa}) one has ${\mu}(\beta,{\rho}, \eta \neq 0)<0$. Hence, for any $k \neq q$
such that $\lim_{\Lambda} \varepsilon_{{k}} = 0$  we get
\begin{equation}\label{zero-non-zero-modes}
\lim_{\eta \rightarrow 0}\lim_{\Lambda}\omega_{\beta,{\mu}_{\Lambda}
(\beta,{\rho}, \eta),\Lambda,\eta}^{0}({b^{*}_{k}b_{k}}/{V}) =
\lim_{\eta \rightarrow 0}\lim_{\Lambda} \frac{1}{V}
\frac{1}{e^{\beta(\varepsilon_{{k}}- {\mu}_{\Lambda} (\beta,{\rho}, \eta)))}-1} = 0 \ ,
\end{equation}
i.e., for any $\alpha_1$ the quasi-average condensation $\rm{(BEC)}_{qa}$ occurs only in one mode (type I),
whereas for $\alpha_1 >1/2$ the BEC is of the type III, see Section \ref{sec:Setup}.

Similarly, diagonalisation  (\ref{4.9.2}) and (\ref{BEC-qa}) allow to apply the quasi-average method to calculate
a nonvanishing for ${\rho} > \rho_{c}(\beta)$ gauge-symmetry breaking $\rm{(SSB)}_{qa}$:
\begin{equation}\label{GSB-qa}
\lim_{\eta \rightarrow 0}\lim_{\Lambda}\omega_{\beta,{\mu}_{\Lambda}
(\beta,{\rho}, \eta),\Lambda,\eta}^{0}({b_{q}}/\sqrt{V}) =
\lim_{\eta \rightarrow 0}\frac{\eta}{\mu(\beta,{\rho}, \eta)} =
 e^{i \, {\rm{arg}}(\eta)} \, \sqrt{{\rho} - \rho_{c}(\beta)} \ ,
\end{equation}
along $\{\eta = |\eta| e^{i \, {\rm{arg}}(\eta)} \wedge |\eta|\rightarrow 0\}$.
Then by inspection of (\ref{Bog-qa}) and (\ref{GSB-qa}) we find that $\rm{(SSB)}_{qa}$ and $\rm{(BEC)}_{qa}$
are equivalent:
\begin{eqnarray}\label{Bog=GSB-qa}
&&\lim_{\eta \rightarrow 0}\lim_{\Lambda} \ \omega_{\beta,{\mu}_{\Lambda}
(\beta,{\rho}, \eta),\Lambda,\eta}^{0}({b^{*}_{q}}/\sqrt{V}) \ \omega_{\beta,{\mu}_{\Lambda}
(\beta,{\rho}, \eta),\Lambda,\eta}^{0}({b_{q}}/\sqrt{V}) = \\
&& = \lim_{\eta \rightarrow 0}\lim_{\Lambda} \ \omega_{\beta,{\mu}_{\Lambda}
(\beta,{\rho}, \eta),\Lambda,\eta}^{0}({b^{*}_{q}b_{q}}/{V}) =  {\rho} - \rho_{c}(\beta) \ . \nonumber
\end{eqnarray}

Note that by (\ref{PBG-ODLRO}) the $\rm{(SSB)}_{qa}$ and $\rm{(BEC)}_{qa}$  are in turn equivalent to
$\rm{(ODLRO)}_{qa}$, whereas for the conventional BEC on gets
\begin{equation*}
\lim_{\Lambda} \ \omega_{\beta,{\mu}_{\Lambda}
(\beta,{\rho}, \eta =0),\Lambda,\eta= 0}^{0}({b^{*}_{q}b_{q}}/{V}) =
\lim_{\Lambda} \ \omega_{\beta,{\mu}_{\Lambda}
(\beta,{\rho}, 0),\Lambda, 0}^{0}({b^{*}_{q}}/\sqrt{V}) \ \omega_{\beta,{\mu}_{\Lambda}
(\beta,{\rho}, 0),\Lambda, 0}^{0}({b_{q}}/\sqrt{V})= 0 \ ,
\end{equation*}
for any $\rho$ and $q\in \Lambda^{*}$ as soon as $\alpha_1 > 1/2$.
}
\end{rem}
We now consider the {{interacting case (4.1)-(4.5)}}. The famous \textbf{Bogoliubov approximation} of
replacing $\eta_{\Lambda}(b),\eta_{\Lambda}(b^{*})$ by $c$-numbers \cite{ZBru}, \cite{Za14} will be instrumental. 
It was proved by Ginibre \cite{Gin}, Lieb, Seiringer and Yngvason (\cite{LSYng1}, \cite{LSYng}) and
S\"{u}t\"{o} \cite{Suto1}, but we shall rely on the method of \cite{LSYng}, which uses the Berezin-Lieb
inequality \cite{Lieb1}.

Let $z$ be a complex number , $|z\rangle = \exp(-|z|^{2}/2 +z b_{{0}}^{*})|0\rangle$ the Glauber coherent vector
in ${\cal F}_{{0}}$ and, as in \cite{LSYng}, let
$(H_{\Lambda,\mu,\lambda})^{'}(z)$ be the \textit{lower symbol} of $H_{\Lambda,\mu,\lambda}$. Then
\begin{equation}
\exp(\beta V p_{\beta,\Lambda,\mu,\lambda}^{'})=\Xi_{\Lambda}(\beta,\mu,\lambda)^{'} =
\int d^{2}z {\rm{Tr}}_{{\cal H}^{'}} \exp(-\beta (H_{\Lambda,\mu,\lambda})^{'}(z)) \ ,
\label{eqn2.4.18}
\end{equation}
where ${\cal H}^{'}= {\cal F}_{{k} \ne {0}}$, with obvious notations for the Fock spaces associated to
the zero mode and the remaining modes. Consider the weight
\begin{equation}
{\cal W}_{\mu,\Lambda, \lambda}(z) := \Xi_{\Lambda}(\beta,\mu,\lambda)^{-1}\\
{\rm{Tr}}_{{\cal H}^{'}}\langle z| \exp(-\beta H_{\Lambda,\mu,\lambda})|z \rangle \ .
\label{eqn2.4.19}
\end{equation}
For almost all $\lambda >0$ it was proved in \cite{LSYng} that the density of distribution
${\cal W}_{\mu,\Lambda, \lambda} (\zeta \sqrt{V})$ converges, as $V \to \infty$, to a $\delta$ function at
the point $\zeta_{max}(\lambda)=\lim_{V \to \infty} {z_{max}(\lambda)}/{\sqrt{V}}$, where $ z_{max}(\lambda)$
maximizes the partition function ${\rm{Tr}}_{{\cal H}^{'}} \exp(-\beta (H_{\Lambda,\mu,\lambda})^{'}(z))$.
Although \cite{LSYng} took $\phi=0$ in (\ref{4.8}), their results in the general case (\ref{4.8}) may be obtained
by the trivial substitution $b_{{0}}\to b_{{0}}\exp(-i\phi)$, $b_{{0}}^{*} \to b_{{0}}^{*} \exp(i\phi)$ coming from
(\ref{4.6}). Note that their expression (34) in \cite{LSYng} may be thus re-written as
\begin{eqnarray}
&& \lim_{V \to \infty} \omega_{\beta,\mu,\Lambda,\lambda}(\eta_{\Lambda}(b_{{0}}^{*}\exp(i\phi))=
\lim_{V \to \infty} \omega_{\beta,\mu,\Lambda,\lambda}(\eta_{\Lambda}(b_{{0}}\exp(-i\phi))  \nonumber \\
&& = \zeta_{max}(\lambda)=\frac{\partial p(\mu,\lambda)}{\partial \lambda} \ , \label{eqn2.4.20}
\end{eqnarray}
and consequently
\begin{equation}
\lim_{V \to \infty} \omega_{\beta,\mu,\Lambda,\lambda}(\eta_{\Lambda}(b_{{0}}^{*})\eta_{\Lambda}(b_{{0}}))
= |\zeta_{max}(\lambda)|^{2} \ .
\label{eqn2.4.21}
\end{equation}
Here above,
\begin{equation}\label{4.22}
p(\beta,\mu,\lambda) = \lim_{V \to \infty} p_{\beta,\mu,\Lambda,\lambda} \ ,
\end{equation}
is the pressure in the thermodynamic limit. Equality (\ref{eqn2.4.20}) follows from the convexity of
$p_{\beta,\mu,\Lambda,\lambda}$ in $\lambda$ by the  Griffiths lemma \cite{Gri66}.
As it is shown in \cite{LSYng} the pressure $p(\beta,\mu,\lambda)$ is equal to
\begin{equation}\label{4.23}
p(\beta,\mu,\lambda)^{'} = \lim_{V \to \infty} p_{\beta,\mu,\Lambda,\lambda}^{'} \ .
\end{equation}
As well as it is also equal to the pressure $p(\beta,\mu,\lambda)^{''}$, which is the thermodynamic limit
of the pressure associated to the \textit{upper symbol} of $H_{\Lambda,\mu,\lambda}$.

It is crucial in the proof of \cite{LSYng} that all of these three pressures coincide with 
$p_{max}(\beta,\mu,\lambda)$, which is the pressure associated to
${\rm{max}}_{z} {\rm{Tr}}_{{\cal H}^{'}} \exp(-\beta (H_{\Lambda,\mu,\lambda})^{'}(z))$.

\begin{theo}\label{theo:4.1}
Consider the system of interacting Bosons (\ref{4.1})-(\ref{4.8}). If the system displays ODLRO in
the sense of (\ref{4.16}), the limit $\omega_{\beta,\mu,\phi}:=
\lim_{\lambda \to +0} \lim_{V \to \infty}\omega_{\beta,\mu,\Lambda,\lambda_{\phi}} $, on the
set $\{\eta(b_{{0}}^{*})^{m}\eta(b_{{0}})^{n}\}_{m,n=0,1}$ exists and satisfies
\begin{equation}\label{4.24.1}
\omega_{\beta,\mu,\phi} (\eta(b_{{0}}^{*})) = \sqrt{\rho_{0}} \exp(i\phi) \ ,
\end{equation}
\begin{equation}\label{4.24.2}
\omega_{\beta,\mu,\phi} (\eta(b_{{0}})) = \sqrt{\rho_{0}} \exp(-i\phi) \ ,
\end{equation}
together with
\begin{equation}\label{eqn3.4.24.3}
\omega_{\beta,\mu,\phi} (\eta(b_{{0}}^{*})\eta(b_{{0}}) = \omega_{\beta,\mu}((\eta(b_{{0}}^{*})\eta(b_{{0}}))
= \rho_{{0}} \ \ \forall \phi \in [0,2\pi) \ ,
\end{equation}
and
\begin{equation}\label{4.24.4}
\omega_{\beta,\mu} = \frac{1}{2\pi} \int_{0}^{2\pi} d\phi \ \omega_{\beta,\mu,\phi} \ .
\end{equation}
On the Weyl algebra the limit defining $\omega_{\beta,\mu,\phi}, \ \phi \in [0,2\pi)$
exists along a net in the $(\lambda,V)$ variables, and defines ergodic states coinciding with those
states that explicitly constructed in Theorem \ref{theo:A.1}. Conversely, if SSB occurs in the special sense
that (4.41) and (4.42) hold, with $\rho_{0} \ne 0$, then ODLRO in the sense of (4.25) takes place.
\end{theo}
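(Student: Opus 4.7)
The plan is to leverage the two key identities (2.4.20) and (2.4.21), already established by the Berezin--Lieb / coherent-state machinery of \cite{LSYng}, together with the freezing property (2.19) of the center in ergodic states, and to settle the subtle double limit via convexity of the pressure.

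First I would prove (4.24.1)--(4.24.2). Applying (2.4.20) with the phase substitution $b_{{0}} \to b_{{0}} e^{-i\phi}$ already noted gives, for each $\lambda > 0$,
\begin{equation*}
\lim_{V\to\infty}\omega_{\beta,\mu,\Lambda,\lambda_{\phi}}(\eta_{\Lambda}(b_{{0}}^{*})) = e^{i\phi}\,\zeta_{max}(\lambda), \qquad \zeta_{max}(\lambda) = \tfrac{\partial p(\beta,\mu,\lambda)}{\partial \lambda}.
\end{equation*}
It therefore suffices to identify $\lim_{\lambda\to 0^{+}}\zeta_{max}(\lambda) = \sqrt{\rho_{0}}$. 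For this I would combine (2.4.21), which reads $\lim_{V}\omega_{\beta,\mu,\Lambda,\lambda}(\eta_{\Lambda}(b_{{0}}^{*})\eta_{\Lambda}(b_{{0}}))=|\zeta_{max}(\lambda)|^{2}$, with the convexity of $\lambda\mapsto p(\beta,\mu,\lambda)$: since $|\zeta_{max}(\lambda)|^{2}$ is non-decreasing as $\lambda\downarrow 0$ (Griffiths lemma and the convexity property used already to justify (2.4.20)), the limit from the right exists. The Bogoliubov inequality of \cite{LSYng1} (or an approximation-of-the-identity argument comparing $\omega_{\beta,\mu,\Lambda,\lambda}(b_{{0}}^{*}b_{{0}}/V)$ with $\omega_{\beta,\mu,\Lambda}(b_{{0}}^{*}b_{{0}}/V)$) pins this one-sided limit to the ODLRO value $\rho_{0}$ of (4.16). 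Extracting the square root (with the correct phase from (2.4.20)) yields (4.24.1); (4.24.2) is the complex conjugate statement.

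Equation (4.24.3) then follows in two steps: apply (2.4.21) to get the limiting value $\rho_{0}$ on the $\omega_{\beta,\mu,\phi}$ side, and use (2.19) --- $\eta(b_{{0}}^{*})\eta(b_{{0}})$ is central and hence frozen to a multiple of the identity in any ergodic state (identified below with $\omega_{\beta,\mu,\phi}$) --- to conclude that the corresponding expectation in the $G$-invariant state $\omega_{\beta,\mu}$ is also $\rho_{0}$. Formula (4.24.4) I would derive by observing that the gauge transformations (4.17) act covariantly on the quasi-Hamiltonians, $\tau_{\phi}(H_{\Lambda,\mu,\lambda_{0}}) = H_{\Lambda,\mu,\lambda_{-\phi}}$, so that $\omega_{\beta,\mu,\Lambda,\lambda_{\phi}} = \omega_{\beta,\mu,\Lambda,\lambda_{0}}\circ\tau_{-\phi}$; averaging over $\phi\in[0,2\pi)$ restores gauge invariance, and the $\phi$-average of the double limit equals the double limit of the $\phi$-average, which is $\omega_{\beta,\mu}$. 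Ergodicity on the Weyl algebra, and coincidence with the states explicitly constructed in Theorem \ref{theo:A.1}, I would establish from the concentration of the weight $\mathcal{W}_{\mu,\Lambda,\lambda}(\zeta\sqrt{V})$ on the single point $\zeta_{max}(\lambda)$ (proved in \cite{LSYng}), which gives a factor (primary) representation, together with uniqueness of the ergodic/central decomposition --- available since the CCR algebra is asymptotically abelian under spatial translations, in analogy with the spin case used in Proposition \ref{3.1}.

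For the converse, assume (4.24.1)--(4.24.2) hold with $\rho_{0}\neq 0$ and that the $\omega_{\beta,\mu,\phi}$ are ergodic. By (2.19) and the hypothesis,
\begin{equation*}
\omega_{\beta,\mu,\phi}(\eta(b_{{0}}^{*})\eta(b_{{0}})) = \omega_{\beta,\mu,\phi}(\eta(b_{{0}}^{*}))\,\omega_{\beta,\mu,\phi}(\eta(b_{{0}})) = \sqrt{\rho_{0}}e^{i\phi}\cdot\sqrt{\rho_{0}}e^{-i\phi} = \rho_{0},
\end{equation*}
independently of $\phi$. Integrating against $d\phi/(2\pi)$ and applying (4.24.4) gives $\omega_{\beta,\mu}(\eta(b_{{0}}^{*})\eta(b_{{0}})) = \rho_{0}$, which by (\ref{eqn3.2.20}) and the $k=0$ triviality of space translations on $b_{{0}}$ coincides with $\lim_{V\to\infty}\omega_{\beta,\mu,\Lambda}(b_{{0}}^{*}b_{{0}}/V)$, i.e., ODLRO in the sense of (4.16). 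The principal obstacle I expect is the rigorous control of the order of limits $V\to\infty$ and $\lambda\to 0^{+}$ --- specifically establishing that $|\zeta_{max}(0^{+})|^{2}$ equals the ODLRO density $\rho_{0}$ rather than an a priori larger number --- which is why the Berezin--Lieb upper/lower symbol coincidence ($p' = p'' = p_{max}$) recalled just before the theorem is indispensable.
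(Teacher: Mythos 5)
Your architecture matches the paper's almost step for step: both rest on the Berezin--Lieb identities (\ref{eqn2.4.20})--(\ref{eqn2.4.21}), on convexity of the pressure in $\lambda$ together with the Griffiths lemma, on gauge covariance of the quasi-Hamiltonian to obtain the decomposition (\ref{4.24.4}), and on asymptotic abelianness plus uniqueness of the ergodic (= central) decomposition to identify the limit states with those of Theorem \ref{theo:A.1}. Your converse (ergodic factorization of $\omega_{\beta,\mu,\phi}(\eta(b_{{0}}^{*})\eta(b_{{0}}))$ followed by integration over $\phi$) is a mild variant of the paper's (Schwarz inequality applied to $\omega_{\beta,\mu,\phi}$ together with (\ref{eqn3.4.27})); either works.

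The one genuine gap sits exactly where you flag ``the principal obstacle'': showing that $\lim_{\lambda \to 0^{+}}|\zeta_{max}(\lambda)|^{2}$ equals the ODLRO density of (\ref{4.16}) rather than an a priori larger number. You defer this to ``the Bogoliubov inequality of \cite{LSYng1}'' or an unspecified approximation argument; but those inequalities give only one direction (the quasi-average dominates the ordinary average), which is precisely why the equivalence was left open in \cite{LSYng} and is the point of this theorem --- so the citation does not close the step. The paper closes it with a short symmetric sandwich: gauge invariance of $H_{\Lambda,\mu}$ gives $\omega_{\beta,\mu,\Lambda,\lambda}(\eta(b_{{0}}^{*})\eta(b_{{0}}))=\omega_{\beta,\mu,\Lambda,-\lambda}(\eta(b_{{0}}^{*})\eta(b_{{0}}))$, i.e.\ (\ref{eqn2.4.26.1}), so the $\lambda\to+0$ and $\lambda\to-0$ limits of this quantity coincide, (\ref{eqn2.4.26.4}); on the other hand, monotonicity of $\partial p/\partial\lambda$ and the Griffiths lemma trap $\lim_{V}\omega_{\beta,\mu,\Lambda}(b_{{0}}^{*}b_{{0}}/V)$ between these two one-sided limits, which is (\ref{eqn3.4.27}). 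Since the two ends are equal, all three quantities coincide, yielding (\ref{eqn3.4.28}) and hence $\lim_{\lambda\to 0^{+}}\zeta_{max}(\lambda)=\sqrt{\rho_{{0}}}$. Without this (or an equivalent) argument, (\ref{4.24.1})--(\ref{eqn3.4.24.3}) are not established; with it, the rest of your proposal goes through as written.
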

\noindent
\begin{proof} We need only prove the direct statement, because the converse follows by applying the Schwarz
inequality to the states $\omega_{\beta,\mu,\phi}$, together with the forthcoming (\ref{eqn3.4.27}).

We thus prove ODLRO $\Rightarrow$ SSB. We first assume that some state $ \omega_{\beta,\mu,\phi_{0}},\phi_{0}
\in [0,2\pi)$ satisfies ODLRO. Then by (\ref{eqn2.4.21}),
\begin{equation}\label{eqn2.4.25.1}
\lim_{\lambda \to +0} \lim_{V \to \infty} \omega_{\beta,\mu,\Lambda,\lambda}(\eta(b_{{0}}^{*})\eta(b_{{0}}))
= \lim_{\lambda \to +0} |\zeta_{max}(\lambda)|^{2} =: \rho_{{0}} > 0 \ .
\end{equation}
The above limit exists by the convexity of $p(\mu,\lambda)$ in $\lambda$ and (\ref{4.14.1}) by virtue of
(\ref{eqn2.4.25.1}),
\begin{equation}\label{4.25.2}
\lim_{\lambda \to +0} \frac{\partial p(\mu,\lambda)}{\partial \lambda} \ne 0 \ .
\end{equation}
At the same time, (\ref{eqn2.4.20}) shows that all states $\omega_{\beta,\mu,\phi}$ satisfy (\ref{eqn2.4.25.1}).
Thus, SSB is broken in the states $\omega_{\beta,\mu,\phi},\phi \in [0,2\pi)$.\,
We now prove that the original assumption (\ref{4.16}) implies that all states
$\omega_{\beta,\mu,\phi},\phi \in [0,2\pi)$ exhibit ODLRO.

Gauge invariance of $\omega_{\beta,\mu,\Lambda}$ (or equivalently $H_{\Lambda,\mu}$) yields, by (\ref{4.7}),
(\ref{4.17}),
\begin{equation}
\omega_{\beta,\mu,\Lambda,\lambda}(\eta(b_{{0}}^{*})\eta(b_{{0}}))
=\omega_{\beta,\mu,\Lambda,-\lambda}(\eta(b_{{0}}^{*})\eta(b_{{0}})) \ .
\label{eqn2.4.26.1}
\end{equation}
Again by (\ref{4.7}), (\ref{4.12.1}) and gauge invariance of $H_{\Lambda,\mu}$,
\begin{equation*}
\lim_{\lambda \to -0} \frac{\partial p(\mu,\lambda)}{\partial \lambda}=
-\lim_{\lambda \to +0} \frac{\partial p(\mu,\lambda)}{\partial \lambda} \ ,
\end{equation*}
and, since by convexity the derivative ${\partial p(\mu,\lambda)}/{\partial \lambda}$ is monotone increasing, we find
\begin{equation}
\lim_{\lambda \to +0} \frac{\partial p(\mu,\lambda)}{\partial \lambda}
= \lim_{\lambda \to +0} \zeta_{max}(\lambda) = \sqrt{\rho_{0}} \ ,
\label{eqn2.4.26.2}
\end{equation}
\begin{equation}
\lim_{\lambda \to -0} \frac{\partial p(\mu,\lambda)}{\partial \lambda}
= -\lim_{\lambda \to +0} \zeta_{max}(\lambda)= -\sqrt{\rho_{0}} \ .
\label{eqn2.4.26.3}
\end{equation}
Again by (\ref{eqn2.4.26.1}),
\begin{equation}
\lim_{\lambda \to -0}\lim_{V \to \infty} \omega_{\beta,\mu,\Lambda,\lambda}(\eta(b_{{0}}^{*})\eta(b_{{0}}))
= \lim_{\lambda \to +0}\lim_{V \to \infty} \omega_{\beta,\mu,\Lambda,\lambda}(\eta(b_{{0}}^{*})\eta(b_{{0}})) \ .
\label{eqn2.4.26.4}
\end{equation}
By \cite{LSYng}, the weight ${\cal W}_{\mu,\lambda}$ is, for $\lambda=0$, supported on a disc with radius
equal to the right-derivative (\ref{4.25.2}). Convexity of the pressure as a function of $\lambda$ implies
\begin{eqnarray*}
\frac{\partial p(\mu,\lambda_{0}^{-})}{\partial \lambda_{0}^{-}} \le \lim_{\lambda \to -0}\frac{\partial
p(\mu,\lambda)}{\partial \lambda}
\le \lim_{\lambda \to +0}\frac{\partial p(\mu,\lambda)}{\partial \lambda} \le \frac{\partial p(\mu,
\lambda_{0}^{+})}{\partial \lambda_{0}^{+}} \ ,
\end{eqnarray*}
for any $\lambda_{0}^{-}<0<\lambda_{0}^{+}$. Therefore, by the Griffiths lemma (see e.g. \cite{Gri66},
\cite{LSYng}) one gets
\begin{equation}
\lim_{\lambda \to -0}\lim_{V \to \infty} \omega_{\beta,\mu,\Lambda,\lambda}(\eta(b_{{0}}^{*})\eta(b_{{0}}))
\le \lim_{V \to \infty} \omega_{\beta,\mu,\Lambda}(\frac{b_{{0}}^{*}b_{{0}}}{V})
\le \lim_{\lambda \to +0}\lim_{V \to \infty} \omega_{\beta,\mu,\Lambda,\lambda}(\eta(b_{{0}}^{*})\eta(b_{{0}})) \ .
\label{eqn3.4.27}
\end{equation}
Then (\ref{eqn2.4.26.4}) and (\ref{eqn3.4.27}) yield
\begin{equation}
\lim_{V \to \infty} \omega_{\beta,\mu,\Lambda}(\frac{b_{{0}}^{*}b_{{0}}}{V})=\\
\lim_{\lambda \to +0}\lim_{V \to \infty} \omega_{\beta,\mu,\Lambda,\lambda}(\eta(b_{{0}}^{*})\eta(b_{{0}})) \ \
\forall \phi \in [0,2\pi) \ .
\label{eqn3.4.28}
\end{equation}
This proves that all $\omega_{\beta,\mu,\phi},\phi \in [0,2\pi)$ satisfy ODLRO, as asserted.

By (\ref{eqn2.4.20}) and (\ref{eqn2.4.26.2}) one gets (\ref{4.24.1}) and (\ref{4.24.2}). Then (\ref{4.24.4}) is a
consequence of the gauge-invariance of $\omega_{\beta,\mu}$. Ergodicity of the states
$\omega_{\beta,\mu,\phi},\phi \in [0,2\pi$ follows from (\ref{eqn3.4.28}) and (\ref{4.24.1}), (\ref{4.24.2}).

An equivalent construction is possible using the Weyl algebra instead of the polynomial algebra, see
\cite{Ver}, pg. 56 and references given there for theorem 6.1 and similarly we could have proceeded so here.
The limit along a subnet in the $(\lambda,V)$ variables exists by weak* compactness, and, by asymptotically
abelianness of the Weyl algebra  for space translations
(see, e.g., \cite{BR97}, Example 5.2.19), the ergodic decomposition
(\ref{4.24.4}), which is also a central decomposition, is unique. Thus, the
$\omega_{\beta,\mu,\phi},\phi \in [0,2\pi)$ coincide with the states constructed in Theorem 6.1.
\end{proof}
\begin{rem}\label{4.1}
{Our Remark \ref{4.3} and Theorem \ref{theo:4.1} elucidate a problem discussed in \cite{LSYng}.
In this paper the authors defined a generalised  Gauge Symmetry Breaking via quasi-average $\rm{(GSB)}_{qa} \ $,
i.e. by $\lim_{\lambda \to +0} \lim_{V \to \infty} \omega_{\beta,\mu,\Lambda,\lambda}(\eta_{\Lambda}(b_{{0}})) \ne 0$.
(If it involves other than gauge group, we denote this by $\rm{(SSB)}_{qa}$.)
Similarly they modified definition of the one-mode condensation denoted by $\rm{(BEC)}_{qa}$ (\ref{eqn2.4.25.1}),
and established the equivalence: $\rm{(GSB)}_{qa} \Leftrightarrow \rm{(BEC)}_{qa}$.
They asked whether $\rm{(BEC)}_{qa} \Leftrightarrow \rm{BEC}$ ?
We show that $\rm{(GSB)}_{qa}$ coincides with GSB  (Definition 2.2), and that BEC is indeed equivalent
to $\rm{(BEC)}_{qa}$.
}
\end{rem}
\begin{rem}\label{4.2}
The states $\omega_{\beta,\mu,\phi}$ in Theorem \ref{theo:4.1} have the property ii) of Theorem \ref{theo:A.1},
i.e., if $\phi_{1} \ne \phi_{2}$, then
$\omega_{\beta,\mu,\phi_{1}} \ne \omega_{\beta,\mu,\phi_{2}}$. By a theorem of Kadison \cite{Kadison}, two factor
states are either disjoint or quasi-equivalent (see Remark 3.1 and references given there), and thus the states
$\omega_{\beta,\mu,\phi}$ for different $\phi$ are mutually disjoint. This fact has a simple explanation: only
for a finite system is the Bogoliubov transformation (\ref{4.9.2}) (which also applies to the interacting system),
which connects different $\phi$, unitary: for infinite systems one has to make an infinite change of an
extensive observable $b_{{0}} \sqrt{V}$, and mutually disjoint sectors result.
This phenomenon also occurs with regard to the magnetization in quantum spin systems, in correspondence to
(\ref{2.31}) and it is in this sense that the word ''degeneracy'' must be understood (compare with the
discussion in \cite{Bog70}).
\end{rem}

\section{Concluding remarks}

In this paper, we reexamined the issue of ODLRO versus SSB by the method of Bogoliubov quasi-averages, commonly
regarded as a \textit{symmetry-breaking trick}. We showed that it represents a general method of construction of
extremal, pure or ergodic states, both for quantum spin systems (Proposition \ref{3.1}) and many-body Boson systems
(Theorem \ref{theo:4.1}).
 The breaking of gauge symmetry in the latter has some analogy with the breaking of gauge and $\gamma_{5}$
 invariance in the Schwinger model (quantum electrodynamics of massless electrons in two dimensions) (\cite{LSwi}),
 in which the vacuum state decomposes in a manner similar to (\ref{4.24.4}). We believe, and argued so in
 Section \ref{sec:QA-spin}, that the quasi-average method is the only universally applicable method, in particular
 to relativistic quantum field theory, to which the imposition of classical boundary conditions is bound to be
 inconsistent with the general principles of local quantum theory, as in the case of the Casimir effect \cite{KNW}.

A general necessary feature for the applicability of the Bogoliubov method is the existence of an order parameter.
In the two examples treated, the Heisenberg ferromagnet (Section \ref{sec:Setup}, see also remark 2.5 concerning
order parameters for quantum spin systems in the general case) and many-body Boson systems
(Section \ref{sec:QA-Boson}),
the respective symmetry-breaking fields (\ref{3.3}) and (\ref{4.7}) are qualitatively different.
Note that (\ref{3.3}) commutes with $H_{\Lambda}$ and the corresponding order parameter, the magnetization,
is physically measurable.
Whereas (\ref{4.7}) does not commute with $H_{\Lambda}- \mu N_{\Lambda}$ (even in the free-gas case!), and the order
parameter involves a \textbf{phase} by (\ref{4.24.1}),(\ref{4.24.2}), which, at first glance, is not physically
measurable.
It has been observed, however, in the interference of two condensates of different phases \cite{ATMDKK} , \cite{BZ},
in the case of trapped gases.
In the latter case, Condensation takes place at ${k} \ne {0}$, and the version of Theorem \ref{theo:4.1} due to
Pul\`{e} et al \cite{PVZ} is the relevant one. Finally, in the quantum spin case there is a residual
symmetry (Remark \ref{rem:2.1}, but none, of course, in the Boson case. These remarks exemplify the rather wide
diversity of types of the \textit{Bogoliubov quasi-avearge}, which make its conjectured universal applicability
further plausible, see e.g. random boson systems \cite{JaZ10}.

As remarked by Swieca \cite{SwiecaJ}, it is the fluctuations occurring all over space which do not allow to take
the "charge" (e.g. (\ref{2.27})) in the limit $V \to \infty$ as a well-defined operator (this would, in particular,
contradict (\ref{2.24})), even if a meaning has been given to the density - as in (\ref{2.25}) - see also
Remark \ref{rem:2.3}. The additional input we offer is that the fluctuation of the charge density (or of a related
operator) is precisely a very nontrivial condition of ODLRO ((\ref{2.33}) or (\ref{4.16}) respectively).

As a final question, the treatment of the free Bose gas suggests that the chemical potential $\mu(\lambda) < 0$
for $\lambda \ne 0$ even after the thermodynamic limit also for interacting systems. It should be interesting to look
at Bose gases with repulsive interactions \cite{BraRo} from the point of view of quasi-average: $\rm{(SSB)}_{qa}$,
using the symmetry breaking term (\ref{4.7}).

\section{Appendix A}

In this Appendix we reproduce, for the reader's convenience, the statement of the basic theorem of Fannes,
Pul\`{e} and Verbeure \cite{FPV1}, see also \cite{PVZ} for the extension to nonzero momentum, and Verbeure's
book \cite{Ver}. Unfortunately, neither \cite{FPV1} nor \cite{PVZ} show that the states
$\omega_{\beta,\mu,\phi},\phi \in [0,2\pi)$ in the theorem below are ergodic. The simple, but instructive
proof of this fact was given by Verbeure in his book \cite{Ver}.
\begin{theo}\label{theo:A.1}
Let $\omega_{\beta,\mu}$ be an analytic, gauge-invariant equilibrium state. If
$\omega_{\beta,\mu}$ exhibits ODLRO (\ref{4.16}), then there exist ergodic states
$ \omega_{\beta,\mu,\phi},\phi \in [0,2\pi)$, not gauge invariant, satisfying
(i) $\forall \theta,\phi \in [0,2\pi)$ such that $\theta \ne \phi$, $\omega_{\beta,\mu,\phi} \ne
\omega_{\beta,\mu,\theta}$;
(ii) the state $\omega_{\beta,\mu}$ has the decomposition
\begin{equation*}
\omega_{\beta,\mu} = \frac{1}{2\pi} \int_{0}^{2\pi} d\phi\omega_{\beta,\mu,\phi} \ .
\end{equation*}
(iii) For each polynomial $Q$ in the operators $\eta(b_{{0}})$,$\eta(b_{{0}}^{*})$, and for each
$\phi \in [0,2\pi)$,
\begin{eqnarray*}
 \omega_{\beta,\mu,\phi}(Q(\eta(b_{{0}}^{*}),\eta(b_{{0}})X)
=  \omega_{\beta,\mu,\phi}(Q(\sqrt{\rho_{0}} \exp(-i\phi),\sqrt{\rho_{0}} \exp(i \phi)X)\ \
\forall X \in {\cal A} \ .
\end{eqnarray*}
\end{theo}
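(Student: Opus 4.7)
The plan is to exploit the central decomposition (\ref{2.15}) of $\omega_{\beta,\mu}$ together with the fact that, by asymptotic abelianness of the Weyl algebra under space translations (\cite{BR97}, Example 5.2.19), the space averages $\eta(b_{0})$ and $\eta(b_{0}^{*})$ defined as in (\ref{2.17})--(\ref{2.18}) belong to the center $Z_{\omega_{\beta,\mu}}$ of the GNS representation. Gauge invariance of $\omega_{\beta,\mu}$ lifts to a unitary implementation $\{V_{\lambda}\}_{\lambda \in [0,2\pi)}$ on $\mathcal{H}_{\omega_{\beta,\mu}}$ via (\ref{2.10})--(\ref{2.11}), and from (\ref{4.17}) one reads off the covariance $V_{\lambda}\, \eta(b_{0})\, V_{\lambda}^{*} = e^{-i\lambda}\, \eta(b_{0})$.

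The first step is to decompose $\omega_{\beta,\mu}$ along the center. On each ergodic component, $\eta(b_{0})$ acts as a scalar $\zeta \in \mathbb{C}$ by (\ref{2.19}), so each component is uniquely labelled by such a $\zeta$. The gauge action permutes the components via $\zeta \mapsto e^{-i\lambda}\,\zeta$, and gauge invariance of $\omega_{\beta,\mu}$ forces the induced probability measure on $\mathbb{C}$ to be $U(1)$-invariant, hence supported on a union of circles centered at the origin. To collapse this support to a single circle of radius $\sqrt{\rho_{0}}$, I would argue that ODLRO (\ref{4.16}) can be promoted from a scalar identity to the stronger central-operator identity $\eta(b_{0}^{*})\eta(b_{0}) = \rho_{0}\,\mathds{1}$. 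One route is the vanishing-of-fluctuations estimate $\omega_{\beta,\mu}\bigl((\eta_{\Lambda}(b_{0}^{*})\eta_{\Lambda}(b_{0}) - \rho_{0})^{2}\bigr) \to 0$ as $V \to \infty$, which uses the analyticity assumption on $\omega_{\beta,\mu}$ together with clustering of the equilibrium state under space translations.

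Parametrizing the resulting circle of support by $\phi \in [0,2\pi)$ yields, for each $\phi$, an ergodic component $\omega_{\beta,\mu,\phi}$ in which $\eta(b_{0}) = \sqrt{\rho_{0}}\, e^{i\phi}$ and $\eta(b_{0}^{*}) = \sqrt{\rho_{0}}\, e^{-i\phi}$ as scalars. Property (i) is immediate, since the value $\omega_{\beta,\mu,\phi}(\eta(b_{0})) = \sqrt{\rho_{0}}\, e^{i\phi}$ distinguishes the components. Property (ii) follows from the uniqueness of the central decomposition combined with the $U(1)$-invariance of the spectral measure, which forces it to be the normalized Haar measure on the circle. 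Property (iii) then follows because, on each ergodic component, the central elements $\eta(b_{0}^{*})$ and $\eta(b_{0})$ act as the c-numbers $\sqrt{\rho_{0}}\, e^{\mp i\phi}$ and can therefore be pulled out of any polynomial inside the state.

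The hard part will be the upgrade from the scalar ODLRO identity (\ref{4.16}) to the central-operator identity $\eta(b_{0}^{*})\eta(b_{0}) = \rho_{0}\,\mathds{1}$, i.e.\ the concentration of the spectral measure of $|\eta(b_{0})|$ at the single value $\sqrt{\rho_{0}}$. Without this concentration one only obtains a decomposition over a rotation-invariant measure on $\mathbb{C}$ potentially supported on many circles. This is where the analyticity hypothesis on $\omega_{\beta,\mu}$ and the stronger clustering properties invoked in \cite{FPV1}, \cite{PVZ} enter in an essential way; everything else in the argument is structural and follows from the general theory of central decompositions and the covariance of $\eta(b_{0})$ under the gauge group.
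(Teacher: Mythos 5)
First, a point of reference: the paper does not prove Theorem~\ref{theo:A.1} at all --- it reproduces the statement from Fannes--Pul\`{e}--Verbeure \cite{FPV1} and Verbeure's book \cite{Ver}, remarking only that the proof is \emph{constructive} and that its essential ingredient is the separating character (faithfulness) of the thermal state. Your structural frame --- $\eta(b_{0}),\eta(b_{0}^{*})$ lie in the center, the gauge group acts on them by $\zeta\mapsto e^{-i\lambda}\zeta$, gauge invariance forces a rotation-invariant measure on $\mathbb{C}$, and Haar measure on the circle gives (ii) --- is sound and is indeed the right skeleton. But the step you yourself call ``the hard part'' is a genuine gap, and the route you propose to close it cannot work. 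You want $\omega_{\beta,\mu}\bigl((\eta_{\Lambda}(b_{0}^{*})\eta_{\Lambda}(b_{0})-\rho_{0})^{2}\bigr)\to 0$ from ``clustering of the equilibrium state under space translations.'' Clustering in the relevant (weak, space-averaged) sense is precisely ergodicity of $\omega_{\beta,\mu}$, i.e.\ Eq.~(\ref{eqn2.2.22}); if $\omega_{\beta,\mu}$ had that property, then by gauge invariance $\omega_{\beta,\mu}(\eta(b_{0}^{*})\eta(b_{0}))=\omega_{\beta,\mu}(\eta(b_{0}^{*}))\,\omega_{\beta,\mu}(\eta(b_{0}))=0$, contradicting the ODLRO hypothesis. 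So the very hypothesis of the theorem rules out the clustering you would need, and without the concentration of $|\eta(b_{0})|$ at the single value $\sqrt{\rho_{0}}$ you only obtain a decomposition over a rotation-invariant measure possibly supported on several circles: the states obtained by conditioning on the phase are then mixtures over the radius, (iii) fails with the fixed value $\sqrt{\rho_{0}}$, and ergodicity is lost. The actual construction in \cite{FPV1}, \cite{Ver} does this concentration by hand, essentially by taking limits of normalized functionals built from high powers of self-adjoint central elements such as $e^{i\phi}\eta(b_{0}^{*})+e^{-i\phi}\eta(b_{0})$, which drive the spectral measure to the edge of its support; the analyticity hypothesis is what makes these unbounded polynomials admissible, and the faithfulness of the KMS state (false for ground states, which is why the paper stresses it) is what guarantees the limits define states with the asserted values. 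None of that machinery appears in your sketch, and faithfulness is never used.

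Second, even granting the concentration, your argument for ergodicity of the $\omega_{\beta,\mu,\phi}$ is incomplete. Conditioning on the value of $\eta(b_{0})$ decomposes $\omega_{\beta,\mu}$ only along the abelian subalgebra of the center generated by $\eta(b_{0})$, not along the full center $Z_{\omega_{\beta,\mu}}$; if the center is strictly larger, each $\omega_{\beta,\mu,\phi}$ remains a nontrivial mixture of factor states. This is exactly the point the paper flags: ergodicity of the $\omega_{\beta,\mu,\phi}$ is \emph{not} established in \cite{FPV1} or \cite{PVZ} and requires the separate argument supplied by Verbeure \cite{Ver}. To repair your proof you must either identify the center with the von Neumann algebra generated by $\eta(b_{0})$ in this situation, or add the extra step that shows the conditioned states are extremal.
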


We remark, with Verbeure \cite{Ver}, that the proof of Theorem A.1 is \textbf{constructive}. One essential
ingredient is the separating character (or faithfulness) of the state $\omega_{\beta,\mu}$, i.e.,
$\omega_{\beta,\mu}(A) = 0$ implies $A=0$. This property, which depends on the extension of $\omega_{\beta,\mu}$
to the von-Neumann algebra $\pi_{\omega}({\cal A})^{''}$ (see \cite{BR97}, \cite{Hug}) is true for thermal
states, but is not true for ground states, even without this extension: in fact, a ground state (or vacuum)
is non-faithful on ${\cal A}$ (see proposition 3 of \cite{Wrep}). We see, therefore, that thermal states and
ground states might differ with regard to the ergodic decomposition (ii). Compare also with our discussion
in the Concluding remarks.

\bigskip

\noindent
\textbf{Acknowledgements}

\noindent
Some of the issues dealt with in this paper originate in the open problem posed
in Sec.3 of \cite{SeW} and at the of \cite{JaZ10}. One of us (W.F.W.) would like to thank G. L. Sewell for 
sharing with him his views on ODLRO along several years. He would also like to thank the organisers of the 
Satellite conference "Operator Algebras and Quantum Physics" of the XVIII conference of the IAMP (Santiago de 
Chile) in S\~{a}oPaulo, July 17th-23rd 2015, for the opportunity to present a talk in which some of the ideas 
of the present paper were discussed. We are  thankful to Bruno Nachtergaele for very  useful remarks, suggestions, 
and corrections, which greatly improved and clarified the paper.


\end{document}